\newtheorem{theorem}{Theorem}
\newtheorem{remark}{Remark}
\begin{document}

\title{Coherent Compensation-Based Sensing for Long-Range Targets in Integrated Sensing and Communication System}
\author{Lin Wang,~\IEEEmembership{Student Member,~IEEE},
        Zhiqing Wei,~\IEEEmembership{Member,~IEEE},
        Xu Chen,~\IEEEmembership{Member,~IEEE},\\
        and Zhiyong Feng,\label{key}~\IEEEmembership{Senior Member,~IEEE}
       
\thanks{

This work was supported in part by the National Natural Science Foundation of China (NSFC) under Grant 92267202 and 62321001, in part by the Fundamental Research Funds for the Central Universities under Grant 2024ZCJH01, and in part by the National Natural Science Foundation
of China (NSFC) under Grant 62271081 and U21B2014.

A part of this paper has been published in the 2023 21st International Symposium on Modeling and Optimization in Mobile, Ad Hoc, and Wireless Networks (WiOpt) \cite{Wiopt}.

Lin Wang, Zhiqing Wei, and Zhiyong Feng are with the Key Laboratory of Universal Wireless Communications, Ministry of Education, School of Information and Communication Engineering, Beijing University of Posts and Telecommunications, Beijing, 100876, China (email: \{wlwl; weizhiqing; fengzy\}@bupt.edu.cn).

Xu Chen is with Innovation Centre of Mobile Communications, China Academy of Information and Telecommunications Technology (CAICT), Beijing, 100191, China (email: chenxu1@caict.ac.cn). 

%

Corresponding authors: Zhiyong Feng, Zhiqing Wei.
}}


\maketitle

\begin{abstract}
Integrated sensing and communication (ISAC) is a promising candidate technology for 6G due to its improvement in spectral efficiency and energy efficiency. Orthogonal frequency division multiplexing (OFDM) signal is a mainstream candidate ISAC waveform. However, there are inter-symbol interference (ISI) and inter-carrier interference (ICI) when the round-trip delay exceeds the cyclic prefix (CP) duration for OFDM signals, which limits the maximum sensing range of ISAC system. When detecting a long-range target, the wide beam inevitably covers the close-range target, of which the echo's power is much larger than that of the long-range target. In order to tackle the above problem, a multiple signal classification (MUSIC) and least squares (LS)-based spatial signal separation method is proposed to separate the echo signals reflected from different targets. Moreover, a coherent compensation-based sensing signal processing method at the receiver is proposed to enhance the signal to interference plus noise power ratio (SINR) of the OFDM block for generating the range-Doppler map (RDM) with higher SINR. Simulation results reveal that the proposed method greatly enhances the SINR of RDM by 10 dB for a target at 500 m compared with two-dimensional fast Fourier transform (2D-FFT) method. Besides, the detection probability is also significantly improved compared to the benchmarking method.
\end{abstract}

\begin{keywords}
Integrated sensing and communication, multiple input multiple output, orthogonal frequency division multiplexing, multiple signal classification, long-range targets, coherent compensation.
\end{keywords}

\IEEEpeerreviewmaketitle
\section{Introduction}
Wireless communication and sensing were developed separately over the past few decades. The emerging machine-type applications in the 5th generation advanced (5G-A) networks and 6th generation (6G) networks have introduced a demand for dual wireless communication and wireless sensing functions in the same network \cite{wei iot}\cite{liufan isac survey}\cite{andrew zhang survey}. As a promising candidate technology for future networks, integrated sensing and communication (ISAC) aims to realize wireless communication and wireless sensing using the same hardware and physical resource, improving spectral and hardware utilization efficiency \cite{feng china communication}\cite{wei network}\cite{wangyuan}.

ISAC waveform has a great impact on the performance of sensing and communication in ISAC system. Due to the advantages such as high spectrum efficiency and capability against frequency-selective fading, orthogonal frequency division multiplexing (OFDM) waveform is a key-enabled technology for communication \cite{zhou survey}. Besides, the excellent sensing properties such as thumbtack-like ambiguity function shape make OFDM an important candidate ISAC waveform \cite{OFDM AF}.  Thus, some sensing signal processing methods based on OFDM waveform were proposed. Sturm \textit{et al.} \cite{Sturm 1}  proposed a two-dimensional fast Fourier transform (2D-FFT)-based OFDM radar signal processing method in the modulation symbol domain. Compared with the classical radar correlation method in time domain, the FFT-based sensing method offers a very high dynamic range and low computational complexity
 \cite{Sturm 2}. Chen \textit{et al.} \cite{chenxu twc} proposed a multiple signal classification (MUSIC)-based signal processing method that realizes the super-resolution for angle, range, and velocity estimation. Wei \textit{et al.} \cite{quhanyang} proposed an iterative signal processing method to improve the sensing accuracy with low complexity.

For long-range sensing using OFDM waveform, the maximum sensing range is limited by the cyclic prefix (CP) duration. The intrinsic CP in the OFDM signal is designed for communication. The CP duration must be greater than the maximum delay spread. Similarly, the maximum round-trip delay for sensing targets should be smaller than CP duration for the monostatic sensing of ISAC base station (BS) \cite{wuyongzhi cp}. Otherwise, there are inter-symbol interference (ISI) and inter-carrier interference (ICI) for the round-trip delay exceeding the CP duration, which limits the maximum sensing range \cite{wuyongzhi cp2}. Hakobyan \textit{et al.} \cite{repeated OFDM} proposed to transmit repetitive OFDM symbols, which does not cause ISI and ICI but limits the data rate. Yuan \textit{et al.} \cite{yuan multiple CP} proposed to prepend an extra sensing CP to the transmitted signal, decoupling the sensing and communication requirements. However, this requires combining multiple OFDM symbols for large FFT numbers which is difficult to implement in hardware. Wu \textit{et al.} \cite{wu VCP} flexibly re-segmented the transmitted and received signals into sub-blocks at the receiver according to the sensing requirements. A virtual CP (VCP) is added to received sub-blocks to form a cyclic shift signal. However, the re-segmented sub-blocks disrupt the OFDM structure, leading to amplified noise in the element-wise-division process. For IEEE 802.11ad-based ISAC signal, Tang \textit{et al.} \cite{Tang}\cite{Tang 2} utilized the pilot signal and a few interference-free OFDM symbols to achieve long-range sensing. This method is valid when the round-trip delay is greater than one OFDM duration, which is not considered under existing cellular network deployment. 
Tang \textit{et al.} \cite{Tang SCP} proposed to alternately deploy zero-power reference signal (ZP-RS) and non-zero-power reference signal (NZP-RS) at subcarriers, which extends the round-trip delay for echo signal without ISI to CP duration plus half of OFDM symbol duration. This novel RS design holds significant potential in scenarios with abundant spectrum resources.
We proposed a coherent compensation-based sensing processing method without changing the existing signal structure to enhance the SINR of OFDM block, thus improving the sensing performance \cite{Wiopt}. However, the previous version does not consider the multiple input multiple output (MIMO) and multiple targets.

Generally, ISAC can be implemented by a unified dual-functional waveform or dedicated communication and sensing waveform \cite{feng china communication}. The unified dual-functional waveforms require sensing targets and communication users in the same direction \cite{comm and sensing in the same direction}. Moreover, the communication and sensing requirements are sometimes asymmetrical, that is, sensing services do not always occur in all communication time slots, and they are related to the targets’ motion state \cite{comm and sensing requirements asymmertrical 1}\cite{comm and sensing requirements asymmertrical 2}. In \cite{comm and sensing requirements asymmertrical 1}, the ISAC system with different sensing frequencies and data frame rates is considered. The dedicated communication and sensing waveforms occupy the orthogonal time domain \cite{time division}, frequency domain \cite{frequency division}, or beam domain resources \cite{Andrew zhang multibeam} to avoid mutual interference. However, whether the unified dual-functional waveform or dedicated communication and sensing waveform is used, MIMO-BS covers a large area due to the wide beam when sensing long-range targets  \cite{wide beam}.
Therefore, the BS simultaneously receives the echo signals from long-range targets and close-range targets, which can't be distinguished in the spatial domain using beam-based direction of arrival (DoA) estimation. The interference caused by the close-range targets may drown the echo signals of the long-range targets, especially when the ISI and ICI also exist in the echo signals of the close-range target. Most existing researches focus on the interference between communication beams and sensing beams \cite{interference 1}\cite{interference 2} and rarely considers intra-beam interference, which is also important for long-range sensing. 

In this paper, we consider both wide-beam and long-range sensing of BS monostatic active sensing. First, we proved that the DoA estimation is not prominently affected by ISI and ICI. The multiple signal classification (MUSIC) method is exploited to estimate the angle of targets within the same beam. Then, a least squares (LS)-based estimation is proposed to achieve the separation of echo signals reflected from different targets in the spatial domain. Moreover, the ISI and ICI of different echo signals are also separated. For long-range targets, a coherent compensation-based sensing signal processing method is proposed to improve the signal to interference plus noise power ratio (SINR) of range-Doppler map (RDM). The main contributions of this paper are summarized as follows.
 \begin{itemize}
	\item [1)] 
	To mitigate the impact of ISI and ICI on sensing, the MUSIC method is exploited to estimate the angles of multiple targets within the same beam. Then, an LS-based signal separation method is proposed to achieve the separation of different echo signals in the spatial domain.
	
	\item [2)]
	We model the ISI and ICI of echo signals reflected from long-range targets and propose a coherent compensation-based sensing signal processing method. The samples after each received OFDM symbol are added to the head of that OFDM symbol to achieve coherent compensation, maintaining the original OFDM structure. The SINR of each OFDM block and SINR of RDM are enhanced, thereby improving the sensing performance.
	
	\item [3)]
	We derive the theoretical values of the SINR of OFDM blocks obtained by the coherent compensation-based method and prove that there exists an optimal number of compensation samples. In addition, the theoretical values of the SINR of RDM obtained by the proposed method and benchmarking method are derived, which match well with the simulation results.
	
	\item [4)]
	Extensive simulations have been conducted to validate both the proposed signal separation method in the spatial domain and coherent compensation-based long-range sensing method. The simulation results demonstrate that the proposed method enhances the SINR of RDM by around $10$ dB compared with the benchmarking method for the task of sensing the target at $500$ m.  The detection probability is also simulated to prove the validity of the proposed method for long-range sensing. 
\end{itemize}

Compared with the earlier conference version \cite{Wiopt}, this paper expands its scope to include multiple antennas and multiple targets. A novel LS estimation-based signal separation method is proposed to achieve signal separation. The difference in Doppler shift between the OFDM symbol and compensation samples is considered to derive the theoretical values for the SINR of the OFDM block. Besides, more performance metrics such as theoretical SINR of RDM and detection probability are analyzed to validate the proposed method.
 
The subsequent sections of this paper are structured as follows. Section \ref{sec: System Model} presents the system model of this paper. The signal separation method and coherent compensation method are introduced in Section \ref{sec: Signal Separation and Coherent Compensation Algorithm}. Section \ref{sec: Performance Analysis} analyzes the theoretical values for the SINR of OFDM block, SINR of RDM, and the maximum sensing range supported by the proposed method. Extensive simulations are conducted to validate the proposed method in Section \ref{sec: Simulation Results and Analysis}, while Section \ref{sec: Conclusion} serves as the the conclusion of the paper.

{\bf{Notations}}: The lowercase letters, lowercase bold letters, and uppercase bold letters represent the scalars, column vectors, and matrices, respectively; $(\cdot)^T$, $(\cdot)^*$, and $(\cdot)^H$ represent the transpose, conjugate, and Hermitian operators, respectively; $\mathbb{E} \{\cdot\}$ and $\mathbb{V} \{\cdot\}$ represent the expectation and variance operators, respectively; $\left| \cdot \right|$,  ${\left\| {\cdot} \right\|_\ell }$, $\text{eig}(\cdot)$, and $tr\left( \cdot \right)$  represent the modulo, $\ell$-norm,  eigenvalue decomposition, and trace operators, respectively; $\left[\cdot \right]^{-1}$ and ${\left[ \cdot \right]^\dag }$  represent the inverse and Moore-Penrose inverse of a matrix with size $N \times N$ and $M \times N$, respectively; $\lfloor \cdot \rceil$ rounds toward the nearest integer,    $\mathbb{C}^{M \times N}$ represents the set of $M \times N$ complex numbers, and $ {\cal C}{\cal N}( {a,{\sigma ^2}} )$ represents the circularly
symmetric complex Gaussian (CSCG) random variable with mean $a$ and variance ${\sigma ^2}$.

\section{System Model}
\label{sec: System Model}
In this section, the BS monostatic sensing model and antenna array model are presented. Subsequently, the transmitted and received signals are characterized as the prerequisite for further signal separation and coherent compensation methods.
\begin{figure}[ht]
	\centering
	\includegraphics[scale=0.56]{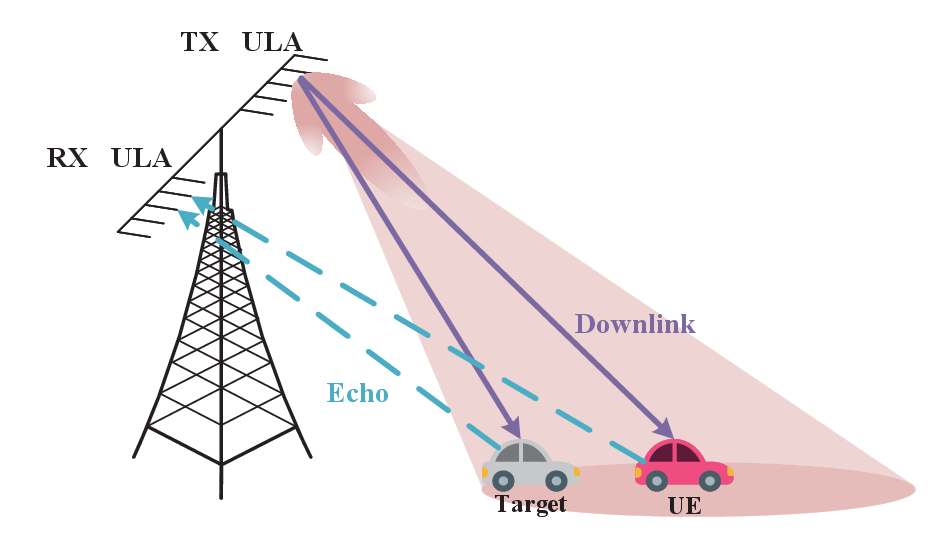}
	\caption{\centering {BS monostatic sensing model.}}
	\label{BS downlink sensing model}
\end{figure}
\subsection{ BS monostatic sensing model}
A MIMO ISAC BS equipped with spatially well-separated transmit and receive antenna arrays is considered to achieve full-duplex operation for conducting monostatic sensing \cite{FD}, as depicted in Fig. \ref{BS downlink sensing model}. The BS transmits OFDM signals to communicate with user equipment (UE) while simultaneously receiving echo signals for sensing the surrounding targets of UE  \footnote{ This paper takes the unified dual-functional waveform as an example. However, the proposed sensing method applies to other ISAC systems such as time-division, frequency-division, etc., as long as the OFDM waveform is used for sensing. As we focus on sensing signal processing, only the moment when communication and sensing exist simultaneously are considered. Note that both functionalities may not always coincide all the time in the practical system \cite{comm and sensing requirements asymmertrical 1}.}. Additionally, perfect beam alignment with the target UE is assumed. Note that BS can receive echo signals reflected from other sensing targets within the beam coverage area due to the wide beam.

\subsection{Antenna array model}
In this paper, the uniform linear arrays (ULAs) are used, as depicted in Fig. \ref{figure_2_a}.  The transmit ULA and receive ULA have $N_t$ and $N_r$ antenna elements, respectively, with element spacing denoted by $d_t$ and $d_r$, respectively.  Then, the steering vectors of transmit ULA and receive ULA are respectively given by
\begin{equation}
{\bf a}\left( \theta_t  \right) = {\left[ {1,{e^{-j\frac{{2\pi }}{\lambda }{d_t}\sin \theta _t}}, \cdots ,{e^{-j\frac{{2\pi }}{\lambda }\left( {{N_t} - 1} \right){d_t}\sin \theta _t}}} \right]^T},
\label{transmit steering vector}
\end{equation}
\begin{equation}
{\bf b}\left( \theta_r  \right) = {\left[ {1,{e^{-j\frac{{2\pi }}{\lambda }{d_r}\sin {\theta _r}}}, \cdots ,{e^{-j\frac{{2\pi }}{\lambda }\left( {{N_r} - 1} \right){d_r}\sin {\theta _r}}}} \right]^T},
\label{receive steering vector}
\end{equation}
where $\lambda$ is the wavelength, $\theta _t$ and $\theta _r$ are the elevation angles of the transmitted and received signals, respectively. Typically, these angles are assumed to be equal for far-field monostatic sensing,  i.e., $\theta_t=\theta_r=\theta$.

Note that although the ULAs are considered, the insights from this paper can be easily extended to the uniform rectangular arrays (URA). Since the URA consists of two one-dimensional ULAs, the beam patterns of URA and ULA are the same on elevation cut or azimuth cut.  Fig. \ref{figure_2_a} and Fig. \ref{figure_2_b} depict the vertical ULA and URA, respectively. Thus, the half power beamwidth (HPBW) is the same at elevation cut, but different at azimuth cut, as shown in Fig. \ref{figure_2_c}-Fig. \ref{figure_2_f}.

\begin{figure*}[!t]
	\setcounter{subfigure}{0}
	\begin{subfigure}[b]{0.32\textwidth}
		\includegraphics[width=6.5cm]{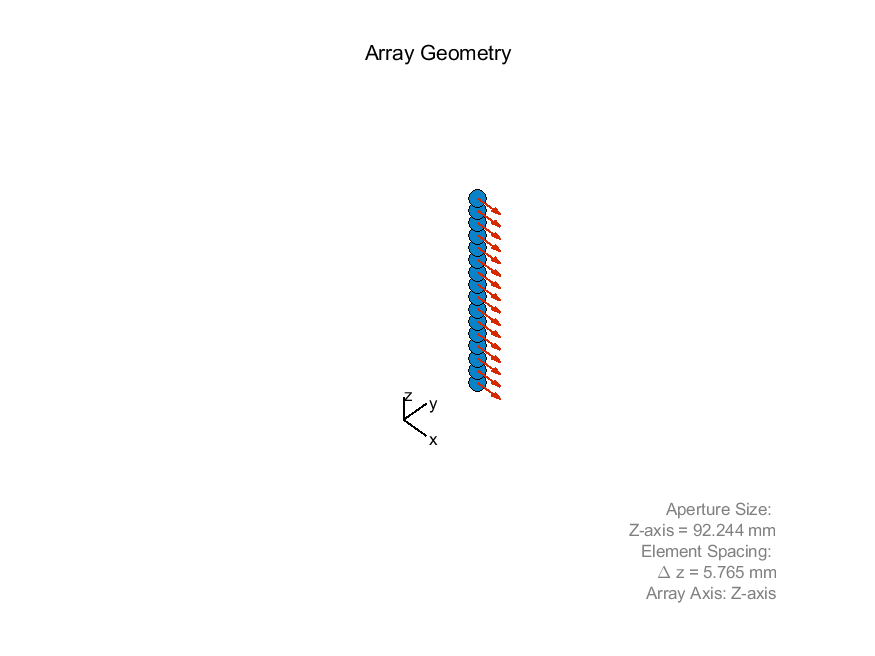}
		\caption{{ULA geometry}}
		\label{figure_2_a}
	\end{subfigure}%
	\setcounter{subfigure}{2}
	\begin{subfigure}[b]{0.32\textwidth}
		\includegraphics[width=6.5cm]{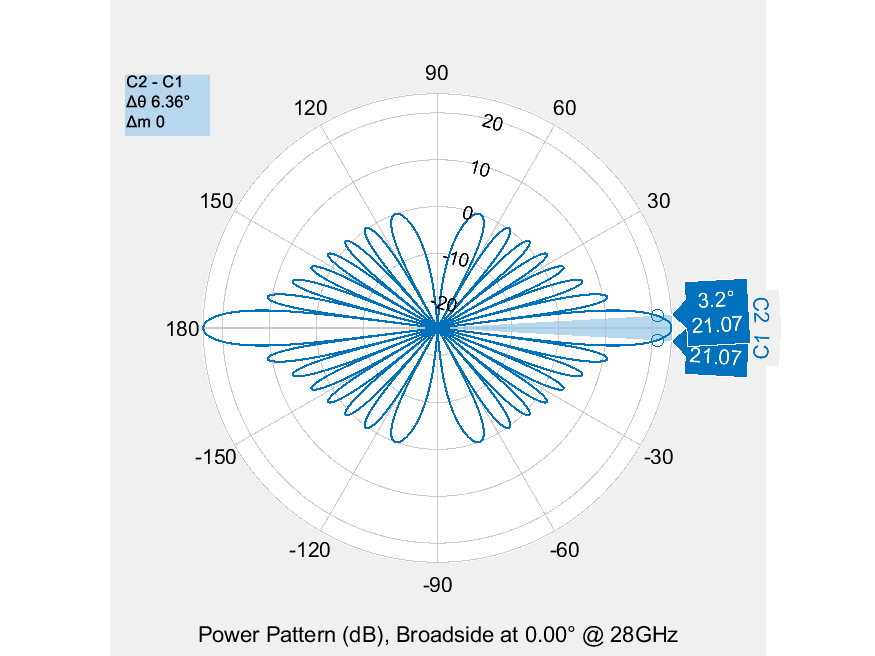}
		\caption{{HPBW at elevation cut of ULA}}
		\label{figure_2_c}
	\end{subfigure}%
	\setcounter{subfigure}{4}
	\begin{subfigure}[b]{0.32\textwidth}
		\includegraphics[width=6.5cm]{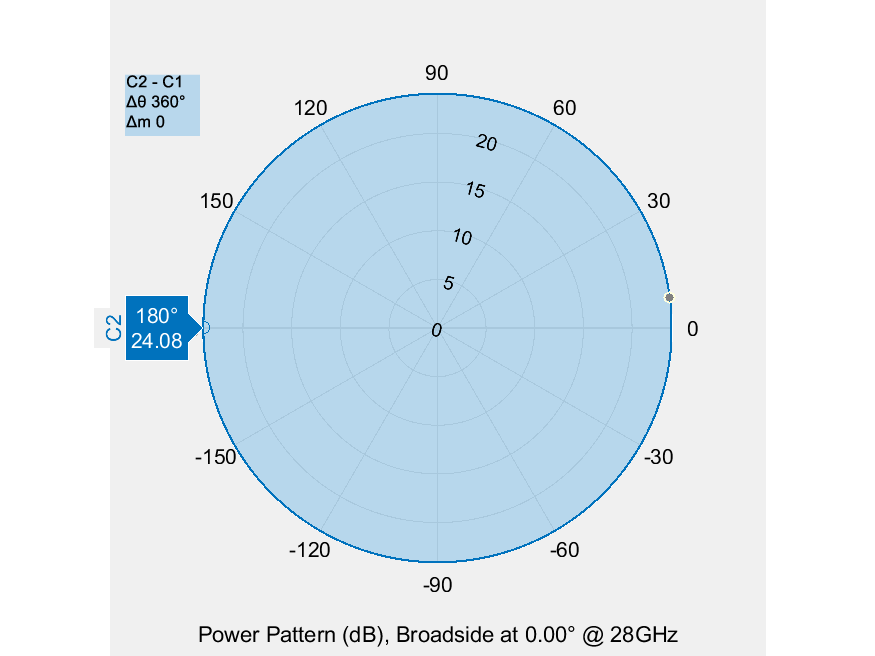}
		\caption{{HPBW at azimuth cut of ULA}}
		\label{figure_2_e}
	\end{subfigure}%
	\\
	\setcounter{subfigure}{1}
	\begin{subfigure}[b]{0.32\textwidth}
		\includegraphics[width=6.5cm]{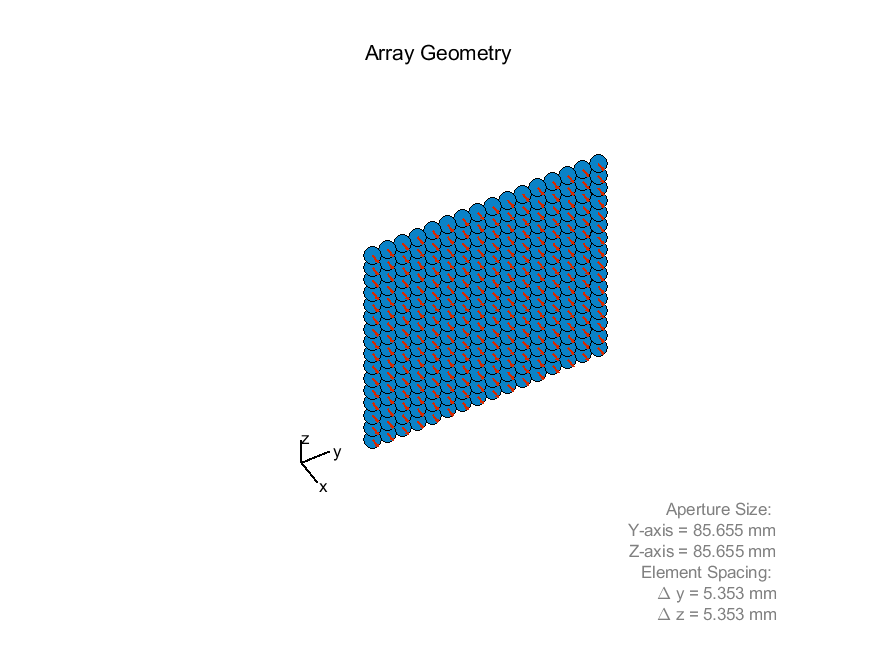}
		\caption{{URA geometry}}
		\label{figure_2_b}
	\end{subfigure}%
	\setcounter{subfigure}{3}
	\begin{subfigure}[b]{0.32\textwidth}
		\includegraphics[width=6.5cm]{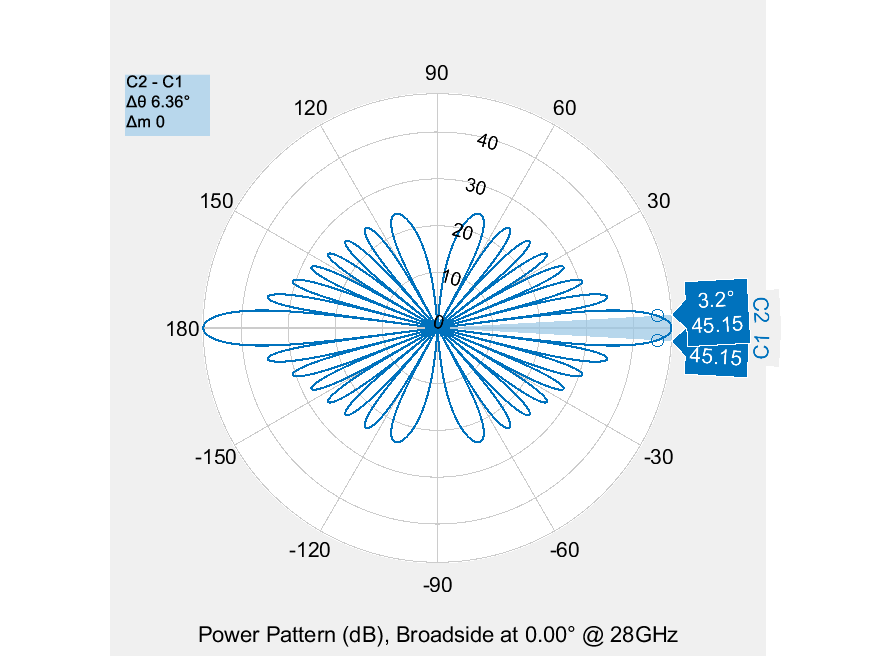}
		\caption{{HPBW at elevation cut of URA}}
		\label{figure_2_d}
	\end{subfigure}%
	\setcounter{subfigure}{5}
	\begin{subfigure}[b]{0.32\textwidth}
		\includegraphics[width=6.5cm]{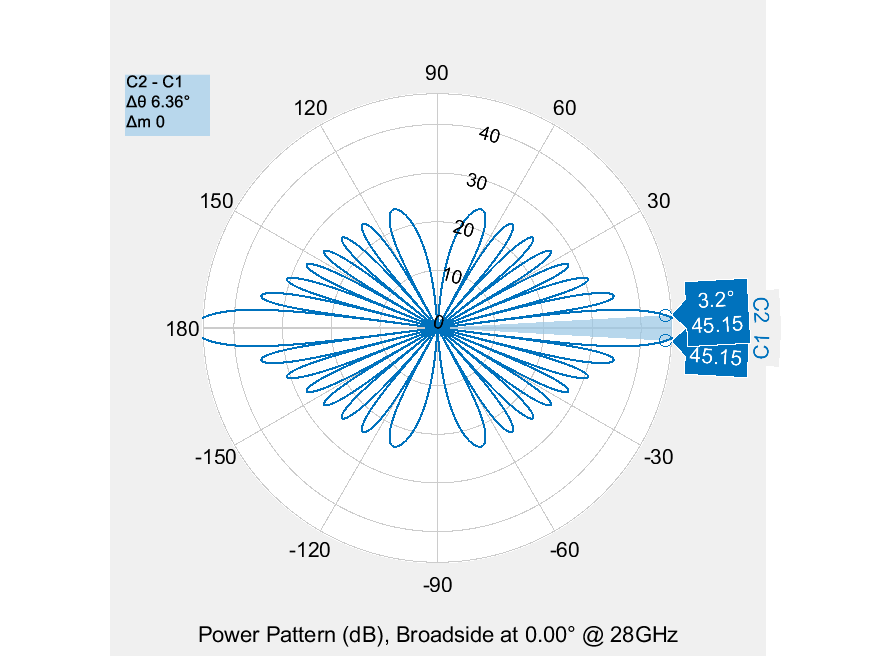}
		\caption{{HPBW at azimuth cut of URA}}
		\label{figure_2_f}
	\end{subfigure}%
	\caption{ {The comparison of HPBW of $16 \times 1$ ULA and $16 \times 16$ URA at elevation and azimuth cuts.}}
	\label{figure_2}
\end{figure*}
\subsection{Transmitted signal model}
In this paper, BS transmits OFDM signals to UEs. The baseband OFDM signal is \cite{chenxu twc}
\begin{equation}
	s\left( t \right) = \frac{1}{{\sqrt {{N_c}} }}\sum\limits_{m = 0}^{M - 1} {\sum\limits_{k = 0}^{{N_c} - 1} {S_{m,k}^{}{e^{j2\pi k\Delta f\left( {t - mT} \right)}}} u\left( {t - mT} \right)} ,
\end{equation}
where $N_c$ and $M$ represent the numbers of subcarriers and symbols, respectively; $S_{m,k}$ is the modulation symbol transmitted at the $k$th subcarrier of the $m$th OFDM symbol;  $\Delta f$ is the subcarrier spacing,  $T=T_{cp}+T_d$ is the total duration of the OFDM symbol block, $T_{cp}$ and $T_d$ are the CP duration and the OFDM symbol duration, respectively. Moreover, $u(t)$ is the rectangular function, which is given by
\begin{equation}
	u\left( t \right) = \left\{ {\begin{array}{*{20}{c}}
			1,&{ - {T_{cp}} \le t \le {T_d}}\\
			0,&{\text{otherwise}}
	\end{array}}. \right.
	\label{pulse shaping function}
\end{equation}
Then, the OFDM signal transmitted by antenna array is
\begin{equation}
	{\bf{x}_f}\left( t \right)={\bf{x}}\left( t \right){e^{j2\pi {f_c}t}} = {{\bf{w}}_{tx}}s\left( t \right){e^{j2\pi {f_c}t}},
\end{equation}
where ${\bf{x}}\left( t \right)={{\bf{w}}_{tx}}s\left( t \right)$ is the baseband OFDM signal transmitted at different antennas, $f_c$ is the carrier frequency, ${\bf w}_{tx} \in \mathbb{C}^{N_t \times 1}$ is the precoding vector at the transmitter. Specifically, when the transmit beam is aligned with the target UE, ${\bf w}_{tx}$ can be obtained by using the LS beamforming method, which can be expressed as \cite{Andrew zhang multibeam}
\begin{equation}
	{{\bf w}_{tx}}{\rm{ = }}{\left[ {{{\bf a}^T}\left( \theta_t  \right)} \right]^\dag }.	
	\label{transmit beam LS method}
\end{equation}

\subsection{Received echo signal model}
The baseband echo signals received by BS can be given by
\begin{equation}
	{\bf y}\left( t \right) = \sum\limits_{u = 0}^{\mathclap{U - 1}} {{{\tilde \alpha }_u}{\bf b}\left( {{\theta _u}} \right){{\bf a}^T}\left( {{\theta _u}} \right){\bf x}\left( {t - {\tau _u}} \right)} {e^{j2\pi {f_{d,u}}t}} + {\bf n}\left( t \right),
	\label{received signal in time domain}
\end{equation}
where ${\bf b}(\theta_u)$ and ${\bf a}(\theta_u)$ are the receive and transmit steering vectors for the $u$th target, respectively; ${\theta _u}$ is the elevation angle between the $u$th target and BS, $U$ is the number of targets within the coverage area of the transmit beam, which is not greater than the size of receive ULA,  i.e., $U \le N_r$; ${\tau _u}{\rm{ = }}2{d_u}/{c_0}$ and ${f_{d,u}}{\rm{ = }}2{v_u}{f_c}/{c_0}$ are the round-trip delay and Doppler frequency shift between the $u$th target and BS, respectively, with $c_0$, ${d_u}$, and ${v_u}$ being the light speed, range, and radial velocity of the $u$th target, respectively; ${\tilde \alpha _u}{\rm{ = }}\sqrt {{P_t}{G_t}{G_r}/{N_t}} {\alpha _u}{\beta _u}{e^{-j2\pi {f_c}{\tau _u}}}$ is the attenuation factor between the $u$th target and BS, with ${\alpha _u}$ being the radar cross section (RCS); $P_t$ is the transmit power, $G_t$ and $G_r$ are the transmit and receive array gains, respectively; ${\beta _u} = \sqrt {\frac{{{\lambda ^2}}}{{{{\left( {4\pi } \right)}^3}d_u^4}}} $ is the free space path loss between the $u$th target and BS. Moreover, ${\bf n}(t)$ is the noise following a CSCG stochastic process. In active sensing, the BS contains priori information about the transmitted signal such as payload data and angle. Thus, the received signal can be expressed as
\begin{align}
	y\left( t \right) 
	&= \sum\limits_{u = 0}^{\mathclap{U - 1}} {{{\tilde \alpha }_u}{\bf w}_{rx}^T{\bf b}\left( {{\theta _u}} \right){{\bf a}^T}\left( {{\theta _u}} \right){\bf x}\left( {t - {\tau _u}} \right)} {e^{j2\pi {f_{d,u}}t}} + \tilde n\left( t \right),
	\label{traditional received signal}
\end{align}
where ${\bf w}_{rx} \in \mathbb{C}^{N_r \times 1}$ is the receive precoding vector, $\tilde n\left( t \right)={\bf w}_{rx}^T{\bf n}\left( t \right)$ is the transformed noise. Generally, the maximum gain is achieved by aligning the receive beam with the transmit beam, i.e., ${{\bf w}_{rx}}{\rm{ = }}{[ {{{\bf b}^T}\left( \theta_0  \right)} ]^\dag }$, with $\theta_0$ being the elevation angle of UE. 

Nevertheless, owing to the wide beam, the coverage area encompasses both long-range and close-range targets.  The close-range targets may deteriorates the sensing performance of the long-range targets, especially when the round-trip delay of the close-range target also exceeds the CP duration, which leads to the ISI and ICI with high power. To tackle this challenge, we further propose a MUSIC and LS estimation-based spatial signal separation method and a coherent compensation-based long-range sensing method in the next section to improve the sensing performance.

\section{Signal Separation and Coherent Compensation Methods}
\label{sec: Signal Separation and Coherent Compensation Algorithm}
As the DoA estimation is not prominently affected by ISI and ICI, we first adopt MUSIC method to distinguish targets with different angles within the same beam. Additionally, an LS estimation-based signal separation method is proposed to separate echo signals in the spatial domain effectively. Furthermore, we propose a coherent compensation-based method for long-range sensing, aiming to enhance the SINR of RDM.

\subsection{MUSIC-based LS signal separation method}
For long-range targets, their round-trip delays exceed the CP duration. Thus, there are ISI and ICI in the frequency domain, which severely deteriorates the detection probability and the accuracy of range and velocity estimation based on the 2D-FFT method. However, the DoA estimation is not prominently affected by ISI and ICI. As shown in \eqref{received signal in matrix}, ${{\bf x}_r}\left( t \right)$ can represent both the useful echo signal and ISI, whose array manifold matrix is the same. Consequently, ISI can be used to assist DoA estimation. 

In this paper, multiple targets fall into the same beam, leading to a challenge in estimating the angles  using traditional beam-based DoA estimation method according to the Rayleigh limit \cite{rayleigh limit}. On the other hand, the super-resolution methods such as MUSIC method \cite{MUSIC}, estimating signal parameter via rotational invariance techniques (ESPRIT) method \cite{ESPRIT} and minimum variance distortionless response (MVDR) method \cite{MVDR} can be utilized to estimate the DoA. Given that the signals are uncorrelated, the MUSIC method theoretically has infinite resolution and can achieve unbiased estimation of angles \cite{MUSIC resolution}. Therefore, the MUSIC method is employed for DoA estimation, and \eqref{received signal in time domain} is recast in a compact form as
\begin{equation}
	{\bf y}\left( t \right){\rm{ = }}{\bf B}\left( {\boldsymbol{\theta}}  \right){{\bf x}_r}\left( t \right) + {\bf n}\left( t \right),
	\label{received signal in matrix}
\end{equation}
where 
\begin{equation}
	{\bf B}\left( \boldsymbol{\theta}  \right){\rm{ = }}\left[ {{\bf b}\left( {{\theta _0}} \right),{\bf b}\left( {{\theta _1}} \right), \cdots ,{\bf b}\left( {{\theta _{U - 1}}} \right)} \right]
\end{equation}
is the array manifold matrix with $\boldsymbol{\theta}=\left[\theta_0, \theta_1,\cdots,\theta_{U-1}\right]^T$ being the DoA vector, and
\begin{equation}
	{{\bf x}_r}\left( t \right) = \left[ {\begin{array}{*{20}{c}}
			{{{\tilde \alpha }_0}{{\bf a}^T}\left( {{\theta _0}} \right){\bf x}\left( {t - {\tau _0}} \right){e^{j2\pi {f_{d,0}}t}}}\\
			{{{\tilde \alpha }_1}{{\bf a}^T}\left( {{\theta _1}} \right){\bf x}\left( {t - {\tau _1}} \right){e^{j2\pi {f_{d,1}}t}}}\\
			\vdots \\
			{{{\tilde \alpha }_{U - 1}}{{\bf a}^T}\left( {{\theta _{U - 1}}} \right){\bf x}\left( {t - {\tau _{U - 1}}} \right){e^{j2\pi f_{d,U-1}t}}}
	\end{array}} \right]
\label{receive echo signal in compact form}
\end{equation}
is the echo signals reflected from different targets. The $N$ snapshots of ${\bf y}(t)$ can be stacked as ${\bf Y}\in\mathbb{C}^{N_r \times N}$. Furthermore, the correlation matrix of ${\bf y}(t)$ can be estimated as \cite{MUSIC}
\begin{equation}
	{\bf \hat R} = \frac{1}{N}{\bf Y}{{\bf Y}^H},
\end{equation}
which is the maximum likelihood estimation. By conducting eigenvalue decomposition on ${\bf \hat R}$, we can obtain  \cite{MUSIC}
\begin{equation}
	\left[ {{\bf A},\boldsymbol{\Sigma} } \right] = \text{eig}( {\bf \hat R} ),
\end{equation}
where ${\bf A}$ and $\boldsymbol{\Sigma}$ are the eigenvector and eigenvalue matrices, respectively. Subsequently, we gain \cite{MUSIC}
\begin{equation}
	{\bf \hat R}{\rm{ = }}{{\bf A}_S}{{\bf \Sigma} _S}{\bf A}_S^H + {{\bf A}_N}{{\bf \Sigma} _N}{\bf A}_N^H,
\end{equation}
where ${\bf \Sigma} _S$ and ${\bf \Sigma} _N$ are the diagonal matrices composed of large and small eigenvalues, respectively; ${\bf A} _S$ and ${\bf A} _N$ are the signal and noise subspaces, respectively. Since the noise subspace ${\bf A} _N$ and array manifold subspace of ${\bf B}\left( {\boldsymbol{\theta}}\right)$ are orthogonal, the DoAs can be estimated using the following angle spectral estimation.
\begin{equation}
	 \mathop {\arg {\rm{ }}\min }\limits_\theta  {\rm{  }}{{\bf b}^H}\left( \theta  \right){{\bf A}_N}{\bf A}_N^H{\bf b}\left( \theta  \right).
	 \label{spectral estimation}
\end{equation}
Moreover, \eqref{spectral estimation} is equivalent to searching for the peaks of
\begin{equation}
{P_{MUSIC}} = \frac{1}{{{\rm{ }}{{\bf b}^H}\left( \theta  \right){{\bf A}_N}{\bf A}_N^H{\bf b}\left( \theta  \right)}}.
\label{MUSIC}
\end{equation}

For monostatic sensing, the prior information contained in the transmitted signal, such as the angle and the width of transmit beam, is known. Therefore, the search range in \eqref{MUSIC} can be significantly narrowed down. We assume that the estimated angles and the number of estimated targets approximate the actual situation. In contrast to blind signal separation, the structure of ${\bf B}\left( {\boldsymbol{\theta}}\right)$ exhibiting a Vandermonde structure is known. With the estimated angle, ${\bf B}\left( {\boldsymbol{\theta}}\right)$ becomes known. 
However, the target's position and speed are random, and their distributions are unknown, making it difficult to obtain the statistical characteristics of \eqref{receive echo signal in compact form}. Therefore, the LS estimation method is utilized to achieve the separation of echo signals reflected from different targets in the spatial domain, as shown in Theorem \ref{theorem 1}.


\begin{theorem}
\label{theorem 1}
Applying LS estimation method to separate echo signals from \eqref{received signal in matrix}, the separated signals can be obtained as
\begin{equation}
	{\bf \bar y}\left( t \right) = {\left[ {{\bf B}\left( \boldsymbol{\theta}  \right)} \right]^\dag }{\bf y}\left( t \right){\rm{ = }}{{\bf x}_r}\left( t \right) + {\bf \bar n}\left( t \right),
	\label{signal seperation}
\end{equation}
where ${\left[ {{\bf B}\left( {\boldsymbol{\theta}}  \right)} \right]^\dag }{\rm{ = }}{\left[ {{{\bf B}^H}\left( {\boldsymbol{\theta}}  \right){\bf B}\left( {\boldsymbol{\theta}}  \right)} \right]^{ - 1}}{{\bf B}^H}\left( {\boldsymbol{\theta}}  \right)$ is the Moore-Penrose inverse of ${{\bf B}\left( \boldsymbol{\theta}  \right)}$, ${\bf \bar n}\left( t \right) = {\left[ {{\bf B}\left( \boldsymbol{\theta}  \right)} \right]^\dag }{\bf n}\left( t \right)$ is the noise after transformation.
\end{theorem}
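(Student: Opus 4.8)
The plan is to recognise \eqref{signal seperation} as the closed-form solution of the per-snapshot ordinary least-squares (LS) fit of the observation ${\bf y}(t)$ to the known dictionary ${\bf B}(\boldsymbol{\theta})$, and then to verify that this solution returns ${\bf x}_r(t)$ contaminated only by a linearly transformed noise term. First I would pose the LS problem
\begin{equation}
	{\bf \bar y}(t) = \mathop{\arg\min}\limits_{{\bf z} \in \mathbb{C}^{U \times 1}} \left\| {\bf y}(t) - {\bf B}(\boldsymbol{\theta}){\bf z} \right\|_2^2 ,
\end{equation}
and, either by setting the Wirtinger gradient of the objective to zero or by invoking the orthogonality principle (the LS residual must be orthogonal to the column space of ${\bf B}(\boldsymbol{\theta})$), derive the normal equations ${\bf B}^H(\boldsymbol{\theta}){\bf B}(\boldsymbol{\theta}){\bf z} = {\bf B}^H(\boldsymbol{\theta}){\bf y}(t)$. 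Everything after this point is bookkeeping; the single substantive ingredient is the invertibility of the Gram matrix ${\bf B}^H(\boldsymbol{\theta}){\bf B}(\boldsymbol{\theta})$.

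To establish that, I would show that ${\bf B}(\boldsymbol{\theta}) \in \mathbb{C}^{N_r \times U}$ has full column rank $U$. Two facts are already in place: (i) the resolved DoAs $\theta_0,\theta_1,\dots,\theta_{U-1}$ are pairwise distinct, since they are recovered as separate peaks of the MUSIC pseudo-spectrum \eqref{MUSIC}; and (ii) $U \le N_r$ by assumption. From \eqref{receive steering vector}, each column ${\bf b}(\theta_u)$ is the Vandermonde vector generated by the node $z_u = e^{-j\frac{2\pi}{\lambda}d_r\sin\theta_u}$, so ${\bf B}(\boldsymbol{\theta})$ is a $U$-column submatrix of an $N_r \times N_r$ Vandermonde matrix with nodes $z_0,\dots,z_{U-1}$. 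Provided the element spacing meets the grating-lobe-free condition (e.g. $d_r \le \lambda/2$, which I would state explicitly) so that $\theta \mapsto z(\theta)$ is injective over the illuminated angular sector, the nodes are pairwise distinct, the Vandermonde columns are linearly independent, and hence $\text{rank}\,{\bf B}(\boldsymbol{\theta}) = U$. Consequently ${\bf B}^H(\boldsymbol{\theta}){\bf B}(\boldsymbol{\theta})$ is Hermitian positive definite and therefore invertible. This rank step is the main (albeit mild) obstacle; the remainder follows directly.

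With invertibility established, the normal equations have the unique solution ${\bf \bar y}(t) = \left[ {\bf B}^H(\boldsymbol{\theta}){\bf B}(\boldsymbol{\theta}) \right]^{-1}{\bf B}^H(\boldsymbol{\theta}){\bf y}(t) = \left[ {\bf B}(\boldsymbol{\theta}) \right]^\dag {\bf y}(t)$, which also identifies the Moore-Penrose inverse of the tall full-rank matrix ${\bf B}(\boldsymbol{\theta})$ with its left inverse, so that $\left[ {\bf B}(\boldsymbol{\theta}) \right]^\dag {\bf B}(\boldsymbol{\theta}) = {\bf I}_U$. Substituting the signal model \eqref{received signal in matrix}, ${\bf y}(t) = {\bf B}(\boldsymbol{\theta}){\bf x}_r(t) + {\bf n}(t)$, then yields
\begin{equation}
	{\bf \bar y}(t) = \left[ {\bf B}(\boldsymbol{\theta}) \right]^\dag {\bf B}(\boldsymbol{\theta}){\bf x}_r(t) + \left[ {\bf B}(\boldsymbol{\theta}) \right]^\dag {\bf n}(t) = {\bf x}_r(t) + {\bf \bar n}(t) ,
\end{equation}
with ${\bf \bar n}(t) = \left[ {\bf B}(\boldsymbol{\theta}) \right]^\dag {\bf n}(t)$, which is exactly \eqref{signal seperation}. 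I would close with one remark: since the $u$th entry of ${\bf x}_r(t)$ in \eqref{receive echo signal in compact form} is precisely the echo of the $u$th target (its useful component together with its own ISI/ICI), the linear map ${\bf y}(t) \mapsto \left[ {\bf B}(\boldsymbol{\theta}) \right]^\dag {\bf y}(t)$ separates the echoes target-by-target in the spatial domain, which is what the downstream coherent-compensation stage requires.
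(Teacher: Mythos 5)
Your proposal is correct and follows essentially the same route as the paper's Appendix A: pose the per-snapshot LS problem, derive the normal equations by zeroing the gradient, identify the solution with the Moore--Penrose left inverse, and substitute the model \eqref{received signal in matrix} to obtain \eqref{signal seperation}. The one genuine addition is your explicit justification that ${\bf B}^H(\boldsymbol{\theta}){\bf B}(\boldsymbol{\theta})$ is invertible via the Vandermonde structure, distinct DoAs, $U \le N_r$, and the spacing condition $d_r \le \lambda/2$ --- a step the paper leaves implicit, so your version is slightly more complete.
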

\begin{proof}
    The proof is provided in {\bf Appendix A}. 
\end{proof}

Then, the separated echo signal reflected from the $u$th target is the $u$th elements in \eqref{signal seperation}, which can be expressed as
\begin{align}
	{{\bar y}^u}\left( t \right) {\rm{ = }}{{\tilde \alpha }_u}{{\bf a}^T}\left( {{\theta _u}} \right){{\bf w}_{tx}}s\left( {t - {\tau _u}} \right){e^{j2\pi {f_{d,u}}t}} + {{\bar n}_u}\left( t \right).
	\label{seperated signal}
\end{align}
where ${\bar n_u}\left( t \right)$ is the transformed noise random process after the LS estimation-based signal separation process. 
    
\begin{theorem}
\label{theorem 2}
     ${\bar n_u}\left( {{t_i}} \right)$ is the sampling of ${\bar n_u}\left( t \right)$ at $t=t_i$, which follows an independent and identically distributed (i.i.d) CSCG distribution, i.e., ${\bar n_u}\left( {{t_i}} \right) \sim {\cal C}{\cal N}\left( {0,{\lambda _u}{\sigma ^2}} \right)$, where $\lambda_u$ is the $u$th diagonal element of ${\left[ {{\bf B}\left( \boldsymbol{\theta}  \right)} \right]^\dag }{[ {{{[ {{\bf B}\left(  \boldsymbol{\theta}  \right)} ]}^\dag }} ]^H}$, and ${\sigma ^2}$ is the power of noise.
\end{theorem}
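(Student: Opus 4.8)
The plan is to recognize that $\bar n_u(t_i)$ is, by construction, one coordinate of a fixed linear image of a zero-mean spatially-white CSCG random vector, and then simply read off its variance from the appropriate entry of the transformed covariance matrix; the statement then follows from the fact that the CSCG family is closed under affine maps.

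First I would recall the noise model: in \eqref{received signal in time domain} the array noise ${\bf n}(t)$ is a CSCG process, so its samples ${\bf n}(t_i)$ are i.i.d. across $i$ with ${\bf n}(t_i)\sim{\cal C}{\cal N}({\bf 0},\sigma^2{\bf I}_{N_r})$ and vanishing pseudo-covariance $\mathbb{E}\{{\bf n}(t_i){\bf n}^T(t_i)\}={\bf 0}$ (the standard assumption of thermal noise that is white over the receive aperture and over the sampling grid). By Theorem~\ref{theorem 1}, the separated noise is ${\bf \bar n}(t_i)=[{\bf B}(\boldsymbol{\theta})]^\dag{\bf n}(t_i)$, and since the estimated angles are treated as known, $[{\bf B}(\boldsymbol{\theta})]^\dag$ is a deterministic matrix. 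A deterministic linear transformation of a CSCG vector is again CSCG, so ${\bf \bar n}(t_i)$ is zero-mean CSCG; its second-order statistics are $\mathbb{E}\{{\bf \bar n}(t_i){\bf \bar n}^H(t_i)\}=\sigma^2[{\bf B}(\boldsymbol{\theta})]^\dag\big([{\bf B}(\boldsymbol{\theta})]^\dag\big)^H$ and $\mathbb{E}\{{\bf \bar n}(t_i){\bf \bar n}^T(t_i)\}=[{\bf B}(\boldsymbol{\theta})]^\dag\,\mathbb{E}\{{\bf n}(t_i){\bf n}^T(t_i)\}\,\big([{\bf B}(\boldsymbol{\theta})]^\dag\big)^T={\bf 0}$, so circular symmetry is preserved.

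Next I would extract the $u$th coordinate: the variance of $\bar n_u(t_i)$ is the $(u,u)$ entry of the covariance matrix above, i.e. $\lambda_u\sigma^2$ with $\lambda_u$ the $u$th diagonal element of $[{\bf B}(\boldsymbol{\theta})]^\dag\big([{\bf B}(\boldsymbol{\theta})]^\dag\big)^H$ (equivalently of $({\bf B}^H(\boldsymbol{\theta}){\bf B}(\boldsymbol{\theta}))^{-1}$), which gives $\bar n_u(t_i)\sim{\cal C}{\cal N}(0,\lambda_u\sigma^2)$. For the i.i.d.-in-$i$ claim, I would note that for $i\neq j$ the vectors ${\bf n}(t_i)$ and ${\bf n}(t_j)$ are independent, hence so are their images under the fixed map $[{\bf B}(\boldsymbol{\theta})]^\dag$ and in particular the scalars $\bar n_u(t_i)$ and $\bar n_u(t_j)$; and the marginal law just derived does not depend on $i$, so the sequence is identically distributed.

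The main obstacle is not computational but a modeling justification: one has to invoke (rather than derive) that the continuous-time CSCG process, after the receiver front-end and sampling, yields samples that are white both temporally and across the $N_r$ receive elements — this is the assumption that makes $\mathbb{E}\{{\bf n}(t_i){\bf n}^H(t_i)\}=\sigma^2{\bf I}_{N_r}$ and the independence across $i$ hold. A secondary point worth stating, to avoid over-interpretation, is that although each scalar sequence $\{\bar n_u(t_i)\}_i$ is i.i.d., the distinct separated streams $\bar n_u$ and $\bar n_v$ with $u\neq v$ are generally correlated because $[{\bf B}(\boldsymbol{\theta})]^\dag\big([{\bf B}(\boldsymbol{\theta})]^\dag\big)^H$ need not be diagonal; the theorem only concerns the per-stream distribution, so this does not affect the proof.
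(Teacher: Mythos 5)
Your proof is correct and follows essentially the same route as the paper: both treat $\bar{\bf n}(t_i)=[{\bf B}(\boldsymbol{\theta})]^\dag{\bf n}(t_i)$ as a deterministic linear image of a spatially white CSCG vector and read off the variance of the $u$th coordinate as the $(u,u)$ entry of $\sigma^2[{\bf B}(\boldsymbol{\theta})]^\dag\bigl([{\bf B}(\boldsymbol{\theta})]^\dag\bigr)^H$, the paper doing this entrywise as $\lambda_u=\sum_r|b_{u,r}|^2$. Your additional remarks on the preserved circular symmetry, the independence across $i$, and the residual cross-correlation between streams $u\neq v$ are all accurate refinements that the paper's proof leaves implicit.
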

\begin{proof}
    The proof is located in {\bf Appendix B}.
\end{proof}
According to \eqref{seperated signal}, the recovered frequency domain symbols in the receiver can be derived under different circumstances of round-trip delays.
\subsubsection{Round-trip delay exceeding the CP duration}
Since the ISAC transceivers are co-located, their clocks are synchronized. As shown in Fig. \ref{echo signal model}, when the round-trip delay is greater than the CP duration but smaller than OFDM symbol duration, i.e., $T_{cp}<\tau _u<T$,  the samplings (removing CP) in the $n$th interval can be expressed as
\begin{align}
\bar y_n^u&\left( i \right) = \bar y^u\left( {nT + i{T_s}} \right)\notag\\
 &= \frac{1}{{\sqrt {{N_c}} }}{{\tilde \alpha }_u}{{\bf{a}}^T}\left( {{\theta _u}} \right){{\bf{w}}_{tx}}\label{sampling of receive signal}\\
 &  \cdot \sum\limits_{m = 0}^{M - 1} {\sum\limits_{k = 0}^{{N_c} - 1} {\left[ \begin{array}{l}
S_{m,k}^{}{e^{j2\pi k\Delta f\left( {nT + i{T_s} - mT - {\tau _u}} \right)}}\\
 \cdot {e^{j2\pi {f_{d,u}}\left( {nT + i{T_s}} \right)}}u\left( {nT + i{T_s} - mT - {\tau _u}} \right)
\end{array} \right]} } \notag \\
&{\rm{ + }}{{\bar n}_u}\left( i \right),\notag
\end{align}
where $i=0,1,\cdots,N_c-1$, $T_s=1/B$ is the sampling interval, and ${\bar n_u}\left( i \right)\sim {\cal C}{\cal N}\left( {0,{\lambda _u}{\sigma ^2}} \right)$ is the sampling of ${\bar n_u}\left( t \right)$. 
According to \eqref{pulse shaping function}, the condition $ - {T_{cp}} \le (n-m)T + i{T_s} - {\tau _u} \le {T_d}$ needs to be satisfied. Consequently, $n + \frac{{i{T_s} - {\tau _u} - {T_d}}}{T} \le m \le n + \frac{{i{T_s} - {\tau _u} + {T_{cp}}}}{T}$ is obtained. Given that ${T_{cp}} < {\tau _u} < T$, $m = n - 1, n$  is obtained. 
Generally, the channel response within an OFDM symbol is time-invariant when the subcarrier spacing is much larger than the maximum Doppler shift, i.e., $\Delta f \gg {f_{d,max}}$ (e.g.  $\Delta f > 10{f_{D,\max }}$ as proposed in \cite{subcarrier spacing 10 times larger than doppler}). Thus, the phase rotation caused by the Doppler shift can be regarded as the same for fast-time samples within one OFDM symbol \cite{repeated OFDM} and \eqref{sampling of receive signal} can be recast as
\begin{align}
&y_n^u\left( i \right)= {{\bar n}_u}\left( i \right) \label{sampling of receive signal rewritten}\\
&+ \frac{1}{{\sqrt {{N_c}} }}{{\tilde \alpha }_u}{{\bf{a}}^T}\left( {{\theta _u}} \right){{\bf{w}}_{tx}}\sum\limits_{k = 0}^{\mathclap{{N_c} - 1}} {\left[ \begin{array}{l}
S_{n - 1,k}^{}{e^{j2\pi k\Delta f\left( {T + i{T_s} - {\tau _u}} \right)}}\\
 \cdot {e^{j2\pi {f_{d,u}}nT}}u\left( {T + i{T_s} - {\tau _u}} \right)
\end{array} \right]} \notag\\
 &+ \frac{1}{{\sqrt {{N_c}} }}{{\tilde \alpha }_u}{{\bf{a}}^T}\left( {{\theta _u}} \right){{\bf{w}}_{tx}}\sum\limits_{k = 0}^{{N_c} - 1} {\left[ \begin{array}{l}
S_{n,k}^{}{e^{j2\pi k\Delta f\left( {i{T_s} - {\tau _u}} \right)}}\\
 \cdot {e^{j2\pi {f_{d,u}}nT}}u\left( {i{T_s} - {\tau _u}} \right)
\end{array} \right]}. \notag
\end{align}

\begin{figure*}[!t]
	\centering
	\includegraphics[width=16cm]{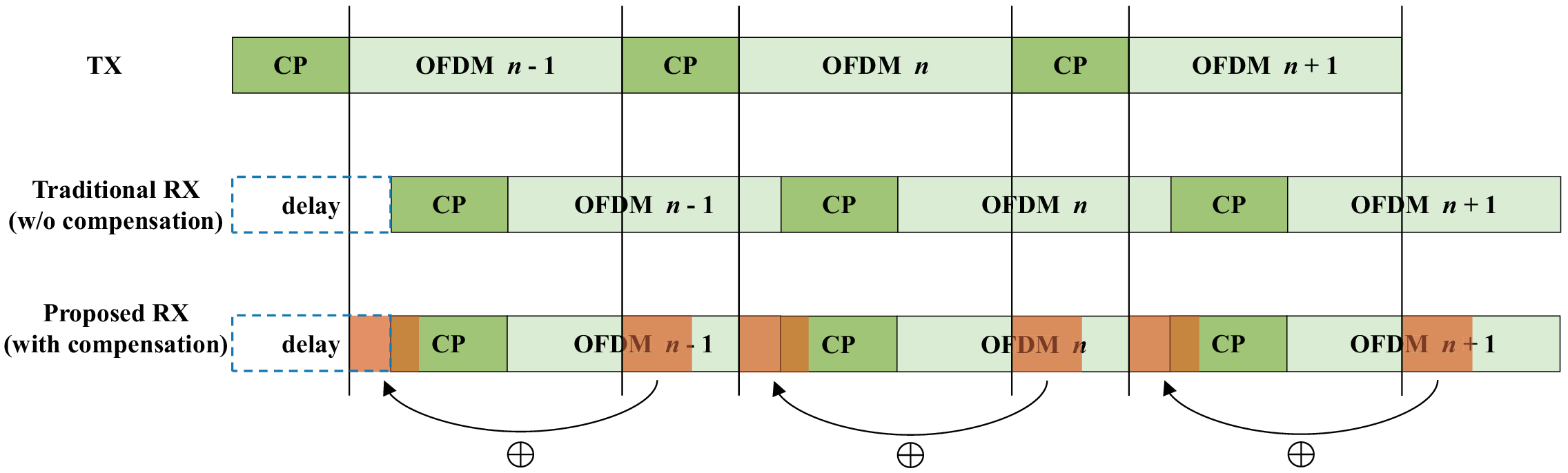}
	\caption{ {The OFDM echo signal model for the long-range target. For the traditional RX, the incomplete OFDM symbol in the receiving window causes ISI and ICI, resulting in reduced SINR. For the proposed RX, the samples (orange parts) after each received OFDM symbol are added to the head of that symbol to achieve coherent compensation.}}
	\label{echo signal model}
	\vspace{-0.3cm}
\end{figure*}

From \eqref{pulse shaping function} and \eqref{sampling of receive signal rewritten}, it is evident that $u\left( {T + i{T_s} - {\tau _u}} \right)$ and $u\left( {i{T_s} - {\tau _u}} \right)$ are nonzero for $i = 0,1,\cdots, N_e  - 1$ and $i = N_e, \cdots ,{N_c} - 1$, respectively, with $N_e={\lfloor {\frac{{{\tau _u} - {T_{cp}}}}{{{T_s}}}} \rceil }$ being the number of samples exceeding the CP duration. Performing FFT on \eqref{sampling of receive signal rewritten}, we obtain the received frequency domain symbol at the $p$th subcarrier of the $n$th OFDM symbol as

\begin{align}
&Y_n^u\left( p \right){\rm{ = }}(1 - \frac{{{N_e}}}{{{N_c}}}){{\tilde \alpha }_u}{{\bf{a}}^T}\left( {{\theta _u}} \right){{\bf{w}}_{tx}}S_{n,p}^{}{e^{ - j2\pi p\Delta f{\tau _u}}}{e^{j2\pi {f_{d,u}}nT}}\notag\\
 &+ \underbrace {\frac{1}{{{N_c}}}{{\tilde \alpha }_u}{{\bf{a}}^T}\left( {{\theta _u}} \right){{\bf{w}}_{tx}}\sum\limits_{i = 0}^{{N_e} - 1} {\sum\limits_{k = 0}^{{N_c} - 1} {\left[ \begin{array}{l}
S_{n - 1,k}^{}{e^{\frac{{j2\pi \left( {k - p} \right)i}}{{{N_c}}}}}\\
 \cdot {e^{\frac{{j2\pi k\left( {{T_{cp}} - {\tau _u}} \right)}}{{{N_c}{T_s}}}}}{e^{j2\pi {f_{d,u}}nT}}
\end{array} \right]} } }_{I_s\left( p \right)}\notag\\
& + \underbrace {\frac{1}{{{N_c}}}{{\tilde \alpha }_u}{{\bf{a}}^T}\left( {{\theta _u}} \right){{\bf{w}}_{tx}}\sum\limits_{i = {N_e}}^{{N_c} - 1} {\sum\limits_{k = 0,k \ne p}^{{N_c} - 1} {\left[ \begin{array}{l}
S_{n,k}^{}{e^{\frac{{j2\pi \left( {k - p} \right)i}}{{{N_c}}}}}\\
 \cdot {e^{\frac{{ - j2\pi k{\tau _u}}}{{{N_c}{T_s}}}}}{e^{j2\pi {f_{d,u}}nT}}
\end{array} \right]} } }_{I_c\left( p \right)}\notag\\
& + {{\bar N}_u}\left( p \right),
\label{frequency domain symbol}
\end{align}
where the first three terms are useful echo signal, ISI, and ICI; the fourth term is the FFT of noise ${\bar n_u}\left( i \right)$. Since ${\bar n_u}\left( i \right)$ is i.i.d, ${\bar N_u}\left( p \right)$ has the same statistical properties as ${\bar n_u}\left( i \right)$. As shown in Fig. \ref{echo signal model}, ISI is due to the echo signal of the previous OFDM symbol falling into the current receiving window. Meanwhile, the OFDM symbol in the receiving window is incomplete. Therefore, the periods of the subcarriers are not an integer multiple in the FFT integration time, which destroys the orthogonality of subcarriers, resulting in ICI. As previously mentioned, we consider that the channel response within an OFDM symbol is time-invariant, so the ICI caused by the Doppler shift is not taken into account.


\subsubsection{Round-trip delay smaller than the CP duration}
In the frequency domain, there is no ISI and ICI when the echo delay is smaller than the CP duration. The received frequency domain symbol is given by
\begin{equation}
	Y_n^u\left( p \right) = {\tilde \alpha _u}{{\bf a}^T}\left( {{\theta _u}} \right){{\bf w}_{tx}}{S_{n,p}}{e^{\frac{{ - j2\pi p{\tau _u}}}{{{N_c}{T_s}}}}}{e^{j2\pi {f_{d,u}}nT}} + {\bar N_u}\left( p \right).
	\label{frequency domain symbol CP doesn't exceed the CP duration}
\end{equation}

\subsection{Coherent compensation-based sensing method}
As shown in \eqref{frequency domain symbol CP doesn't exceed the CP duration}, when the round-trip delay is smaller than the CP duration, the SINR in the frequency domain can be expressed as 
\begin{equation}
	{\Upsilon _{block}^u}{\rm{ = }}\frac{\left| {{{\tilde \alpha }_u}{{\bf{a}}^T}\left( {{\theta _u}} \right){{\bf{w}}_{tx}}} \right|^2\mathbb{E}[ {{{\left| {{S_{n,k}}} \right|}^2}} ]}{{\lambda _u}{\sigma ^2}}={\gamma _0}.
	\label{SINR of short range}
\end{equation}	
 According to \eqref{frequency domain symbol}, when the round-trip delay exceeds the CP duration, the power of the useful signal, ISI, and ICI can be, respectively, expressed as
\begin{equation}
	P_u={\left| {(1 - {\tilde N_e}){{\tilde \alpha }_u}{{\bf{a}}^T}\left( {{\theta _u}} \right){{\bf{w}}_{tx}}} \right|^2}\mathbb{E}[ {{{\left| {{S_{n,p}}} \right|}^2}} ],
	\label{power of useful signal before}
\end{equation}
\vspace{-0.3 cm}
\begin{equation}
	{P_s} = {\tilde N_e}{\left| {{{\tilde \alpha }_u}{{\bf{a}}^T}\left( {{\theta _u}} \right){{\bf{w}}_{tx}}} \right|^2}\mathbb{E}[ {{{\left| {{S_{n - 1,k}}} \right|}^2}} ],
	\label{power of ISI signal before}
\end{equation}
\vspace{-0.3 cm}
\begin{equation}
	{P_c} = {\tilde N_e}(1 - {\tilde N_e}){\left| {{{\tilde \alpha }_u}{{\bf{a}}^T}\left( {{\theta _u}} \right){{\bf{w}}_{tx}}} \right|^2}\mathbb{E}[ {{{\left| {{S_{n,k}}} \right|}^2}} ],
	\label{power of ICI signal before}
\end{equation}
where ${\tilde N_e} = {N_e}/{N_c}$. The proof of \eqref{power of ISI signal before} and \eqref{power of ICI signal before} are provided in {\bf Appendix C}. Therefore, the SINR is given by
\begin{equation}
	{\Upsilon _{block}^u}=\frac{{{{( {1 - {{\tilde N}_e}} )}^2}}}{{{{\tilde N}_e}( {2 - {{\tilde N}_e}} ) + \frac{1}{{{\gamma _0}}}}}.
	\label{SINR of long-range before compensating}
\end{equation}
Comparing \eqref{SINR of short range} with \eqref{SINR of long-range before compensating}, when the round-trip delay exceeds the CP duration, the power ratio of the useful signal to the total echo signal in the frequency domain is reduced to  $({1 - \frac{{{N_e}}}{{{N_c}}}})^2$. Additionally, the introduced ISI and ICI  lead to reduction in the SINR, thereby deteriorating the SINR of RDM and the detection probability.
However, the ISI in the $n$th symbol is coherent with the useful signal in the $(n-1)$th symbol, as shown in \eqref{frequency domain symbol}. Therefore, the power of useful signals can be enhanced through coherent compensation, which improves the SINR of OFDM block in the frequency domain. As shown in Fig. \ref{echo signal model}, the samples after each symbol, which are referred to as compensation samples hereinafter, are added to the head of that symbol in the time domain to achieve coherent compensation. 

Combining \eqref{seperated signal}, the $q$th compensation sample after the $n$th OFDM symbol is given by
\begin{align}
&\bar y_{c,n}^u\left( q \right)=\bar y^u\left( {nT + T_d + q{T_s}} \right)\label{add samples}\\
&{\rm{ = }}\frac{1}{{\sqrt {{N_c}} }}{{\tilde \alpha }_u}{{\bf{a}}^T}\left( {{\theta _u}} \right){{\bf{w}}_{tx}}\sum\limits_{k = 0}^{\mathclap{{N_c} - 1}} {\left[ \begin{array}{l}
S_{n,k}^{}{e^{j2\pi k\Delta f\left( {q{T_s} - {\tau _u}} \right)}}\notag\\
 \cdot {e^{j2\pi {f_{d,u}}\left( {nT + {T_d}} \right)}}\\
 \cdot u\left( {q{T_s} + {T_d} - {\tau _u}} \right)
\end{array} \right]} \\
 &+ \frac{1}{{\sqrt {{N_c}} }}{{\tilde \alpha }_u}{{\bf{a}}^T}\left( {{\theta _u}} \right){{\bf{w}}_{tx}}\sum\limits_{k = 0}^{\mathclap{{N_c} - 1}} {\left[ \begin{array}{l}
{S_{n + 1,k}}{e^{j2\pi {f_d}\left( {nT + {T_d}} \right)}}\\
 \cdot {e^{j2\pi k\Delta f\left( {q{T_s} - {T_{CP}} - {\tau _u}} \right)}}\\
 \cdot u\left( {q{T_s} + {T_d} - T - {\tau _u}} \right)
\end{array} \right]} \notag\\
&{\rm{ + }}{{\bar n}_u}\left( q \right),\notag
\end{align} 
where $q = 0,1, \cdots ,{N_a} - 1$, and $N_a$ is the length of compensation samples.
{{As shown in Fig. \ref{echo signal model} and \eqref{add samples}, the ISI from the $(n+1)$th OFDM symbol will be introduced when the length of compensation samples is greater than the length of offset samples related to the round-trip delay, i.e., $N_a>N_s$, where $N_s=\lfloor {\frac{{{\tau _u}}}{{{T_s}}}} \rceil$. And the SINR of the OFDM block is intuitively lower than that in the case of $N_a=N_s$. Therefore, we mainly focus on the case of  $N_a \le N_s$ in the following analysis to obtain the optimal compensation length \footnote{{Note that the case of $N_a>N_s$ is also considered in the following theoretical analysis and simulations without repeated derivations.}}.}} Adding \eqref{add samples} to \eqref{sampling of receive signal rewritten}, the signal in the time domain after coherent compensation is given by
\begin{align}
			&\tilde y_n^u\left( i \right) = {{\tilde n}_u}\left( i \right) \label{signal in the time domain after coherent compensation}\\
			&+ \frac{1}{{\sqrt {{N_c}} }}{{\tilde \alpha }_u}{{\bf a}^T}\left( {{\theta _u}} \right){{\bf w}_{tx}}\sum\limits_{k = 0}^{\mathclap{{N_c} - 1}} {\left[ \begin{array}{l}
					S_{n - 1,k}^{}{e^{j2\pi k\Delta f\left( {T + i{T_s} - {\tau _u}} \right)}} \notag \\
					\cdot {e^{j2\pi {f_{d,u}}nT}}u\left( {T + i{T_s} - {\tau _u}} \right)
				\end{array} \right]} \\
			&+ \frac{1}{{\sqrt {{N_c}} }}{{\tilde \alpha }_u}{{\bf a}^T}\left( {{\theta _u}} \right){{\bf w}_{tx}}\sum\limits_{k = 0}^{\mathclap{{N_c} - 1}} {\left[ \begin{array}{l}
					S_{n,k}^{}{e^{j2\pi k\Delta f\left( {i{T_s} - {\tau _u}} \right)}}\notag \\
					\cdot {e^{j2\pi {f_{d,u}}\left( {n + 1} \right)T}}u(\frac{{Ti}}{{N_a - 1}} - {T_{cp}})
				\end{array} \right]} \\
			&+ \frac{1}{{\sqrt {{N_c}} }}{{\tilde \alpha }_u}{{\bf a}^T}\left( {{\theta _u}} \right){{\bf w}_{tx}}\sum\limits_{k = 0}^{\mathclap{{N_c} - 1}} {\left[ \begin{array}{l}
					S_{n,k}^{}{e^{j2\pi k\Delta f\left( {i{T_s} - {\tau _u}} \right)}}\notag \\
					\cdot {e^{j2\pi {f_{d,u}}nT}}u\left( {i{T_s} - {\tau _u}} \right)
				\end{array} \right]},
\end{align}
where ${{\tilde n}_u}\left( i \right)$ is the noise after compensation, which can be expressed as
\begin{equation}
	{\tilde n_u}\left( i \right)\sim \left\{ {\begin{array}{*{20}{c}}
			{{\cal C}{\cal N}\left( {0,2{\lambda _u}{\sigma ^2}} \right),}&{0 \le i \le N_a - 1}\\
			{{\cal C}{\cal N}\left( {0,{\lambda _u}{\sigma ^2}} \right),}&{N_a \le i \le {N_c} - 1}
	\end{array}}. \right.
\label{noise in time domain after coherent compensation}
\end{equation}

Performing FFT on \eqref{signal in the time domain after coherent compensation}, the received frequency domain symbol after coherent compensation  can be expressed as
\begin{align}
&\tilde Y_n^u\left( p \right)\label{frequency symbol after coherent compensation}\\
&= ( {1 + {{\tilde N}_a}{e^{j2\pi {f_{d,u}}T}} - {{\tilde N}_e}} )\left[ \begin{array}{l}
{{\tilde \alpha }_u}{{\bf{a}}^T}\left( {{\theta _u}} \right){{\bf{w}}_{tx}}\\
 \cdot S_{n,p}^{}{e^{ - j2\pi p\Delta f{\tau _u}}}{e^{j2\pi {f_{d,u}}nT}}
\end{array} \right]\notag\\
& + \frac{{{\tilde \alpha }_u}{{\bf{a}}^T}\left( {{\theta _u}} \right){{\bf{w}}_{tx}}}{{{N_c}}}\sum\limits_{\scriptstyle k = 0,\hfill\atop
\scriptstyle k \ne p\hfill}^{{N_c} - 1} {\left[ \begin{array}{l}
S_{n,k}^{}{e^{ - j2\pi k\Delta f{\tau _u}}}{e^{j2\pi {f_{d,u}}nT}}\\
 \cdot \left( \begin{array}{l}
{e^{j2\pi {f_{d,u}}T}}\sum\limits_{i = 0}^{N_a - 1} {{e^{\frac{{j2\pi i\left( {k - p} \right)}}{{{N_c}}}}}} \\
+ \sum\limits_{i = {N_e}}^{{N_c} - 1} {{e^{\frac{{j2\pi i\left( {k - p} \right)}}{{{N_c}}}}}} 
\end{array} \right)
\end{array} \right]} \notag\\
 &+ \frac{{{\tilde \alpha }_u}{{\bf{a}}^T}\left( {{\theta _u}} \right){{\bf{w}}_{tx}}}{{{N_c}}}\sum\limits_{i = 0}^{{N_e} - 1} {\sum\limits_{k = 0}^{{N_c} - 1} {\left[ \begin{array}{l}
S_{n - 1,k}^{}{e^{j2\pi k\Delta f\left( {T + i{T_s} - {\tau _u}} \right)}}\\
 \cdot {e^{j2\pi {f_{d,u}}nT}}{e^{\frac{{ - j2\pi ip}}{{{N_c}}}}}
\end{array} \right]} } \notag\\
& + \tilde N_n^u\left( p \right),\notag
\end{align}
where ${\tilde N_a}=N_a/N_c$. The power of the useful signal, ICI, and noise after coherent compensation can be, respectively, expressed as 
\begin{align}
	{\tilde P_u} = &\left[ {{{( {{1} - {\tilde N_e}} )}^2} + {{\tilde N_a}^2} + 2{\tilde N_a}( {{1} - {\tilde N_e}} )\cos \left( {2\pi {f_{d,u}}T} \right)}\right]\notag \\
	&\cdot {\left| {{{\tilde \alpha }_u}{{\bf a}^T}\left( {{\theta _u}} \right){{\bf w}_{tx}}} \right|^2}\mathbb{E}\left[ {{{\left| {S_{n,p}^{}} \right|}^2}} \right],
 \label{power of useful signal after}
\end{align}
\vspace{-0.3cm}
\begin{align}
	{{\tilde P}_c} =& \left[ \begin{array}{l}
		{\tilde N_a}( {1 - {\tilde N_a}} ) + {\tilde N_e}( {1 - {\tilde N_e}})\\
		+ 2( {{\tilde N_a}{\tilde N_e} - \min [ {{\tilde N_a},{\tilde N_e}} ]} )\cos \left( {2\pi {f_{d,u}}T} \right)
	\end{array} \right]\notag \\
	&\cdot{\left| {{{\tilde \alpha }_u}{{\bf a}^T}\left( {{\theta _u}} \right){{\bf w}_{tx}}} \right|^2}\mathbb{E}\left[ {{{\left| {S_{n,k}^{}} \right|}^2}} \right],
\end{align}
\vspace{-0.3cm}
\begin{equation}
	{\tilde P_n} = ( {1 + {{\tilde N}_a}}){\lambda _u}{\sigma ^2}.
	\label{power of noise after coherent compensation}
\end{equation}

The proof of \eqref{power of noise after coherent compensation} is provided in {\bf Appendix D}. Comparing \eqref{power of useful signal before} with \eqref{power of useful signal after}, it is evident that the power of the useful signal improves after coherent compensation. Furthermore, the power of ICI is minimum when ${\tilde N_a}={\tilde N_e}$.  Compared with the noise in \eqref{frequency domain symbol}, the power of noise in \eqref{frequency symbol after coherent compensation} increases due to the noise samples introduced by compensation, as shown in \eqref{power of noise after coherent compensation}.
Combining \eqref{power of ISI signal before} and \eqref{power of useful signal after}-\eqref{power of noise after coherent compensation}, the SINR after coherent compensation can be expressed as \eqref{SINR after coherent compensation} at the top of next page.

{For the case of $N_a > N_s$, the SINR after coherent compensation can be expressed as \eqref{SINR after coherent compensation for case of Na > Ns} at the top of next page, where ${\tilde N}_s=N_s/N_c$. Since the derivation is similar to the case of  $N_a \le N_s$, we omit it for brevity.}
\section{Performance Analysis}
\begin{figure*}
	\begin{equation}
		{\Upsilon _{c, block}^u} = \frac{{\left[ {{{( {1 - {{\tilde N}_e}} )}^2} + \tilde N_a^2 + 2{{\tilde N}_a}( {1 - {{\tilde N}_e}} )\cos \left( {2\pi {f_{d,u}}T} \right)} \right]}}{{{{\tilde N}_a}\left[ {1 - {{\tilde N}_a} + 2{{\tilde N}_e}\cos \left( {2\pi {f_{d,u}}T} \right)} \right] + {{\tilde N}_e}( {2 - {{\tilde N}_e}} ) - 2\min \left[ {{{\tilde N}_a},{{\tilde N}_e}} \right]\cos \left( {2\pi {f_{d,u}}T} \right) + ( {1 + {{\tilde N}_a}})\frac{1}{{{\gamma _0}}}}}.
		\label{SINR after coherent compensation}
	\end{equation}
	
	{\noindent} \rule[-10pt]{18cm}{0.05em}
\end{figure*}
\label{sec: Performance Analysis}
In this section, the performance of the proposed method is analyzed. Firstly, the SINR of the OFDM block is derived and the optimal length of compensation samples is analyzed. Subsequently, the theoretical SINR of RDM is derived, which is related to sensing performance metrics such as detection probability. Finally, the maximum sensing range supported by the proposed method is clarified.
\subsection{SINR of OFDM block}
As shown in \eqref{SINR after coherent compensation}, when the length of compensation samples is less than $N_e$, i.e., $0<N_a<{N}_e$ (or $0<{\tilde N}_a<{\tilde N}_e$), the SINR of OFDM block can be expressed as  
\begin{align}
    {\gamma _2} = \frac{{{{(1 - {{\tilde N}_e})}^2} + \tilde N_a^2 + 2{{\tilde N}_a}(1 - {{\tilde N}_e})\cos (2\pi {f_{d,u}}T)}}{{h_1( {{{\tilde N}_a}})}},
    \label{SINR after coherent compensation length 1}
\end{align}
where $h_1( {{{\tilde N}_a}} )$ is the power of interference plus noise after coherent compensation, which is given by
\begin{align}
h_1( {{{\tilde N}_a}} ) &= {{\tilde N}_a}\left[ {(1 - {{\tilde N}_a}) - 2(1 - {{\tilde N}_e})\cos (2\pi {f_{d,u}}T)} \right]\\
 &+ {{\tilde N}_e}(2 - {{\tilde N}_e}) + (1 + {{\tilde N}_a})\frac{1}{{{\gamma _0}}}. \notag
\end{align}
For the convenience of analysis, we treat ${\tilde N}_a$ as a continuous variable. Therefore, the first-order derivative of \eqref{SINR after coherent compensation length 1} with respect to ${\tilde N}_a$ is given by 
\begin{align}
    \frac{\partial }{{\partial {N_a}}}{\gamma _2} = \frac{{(\frac{1}{{{\gamma _0}}} + 1)f( {{{\tilde N}_a}} )}}{{{{\left[ {h_1( {{{\tilde N}_a}} )} \right]}^{\rm{2}}}}},\label{derivatinve of SINR length 1}
\end{align}
where
\begin{equation}
	f( {{{\tilde N}_a}} ) = \tilde N_a^2 + 2{\tilde N_a} - {( {1 - {{\tilde N}_e}} )^2} + 2( {1 - {{\tilde N}_e}} )\cos ( {2\pi {f_{d,u}}T} ).
\end{equation}
Moreover, $f( {{{\tilde N}_a}} )$ increases monotonically within the interval $\left[ {0,{N_e}} \right]$. Thus, we obtain
\begin{align}
		f( {{{\tilde N}_a}} ) \ge  - {( {1 - {{\tilde N}_e}} )^2} + 2( {1 - {{\tilde N}_e}} )\cos ( {2\pi {f_{d,u}}T} )\notag \\
		{\rm{ = }}( {1 - {{\tilde N}_e}} )\left[ {2\cos \left( {2\pi {f_{d,u}}T} \right) - 1 + {{\tilde N}_e}} \right].
\end{align}
Since the Doppler frequency shift is much smaller than the subcarrier spacing, $\cos \left( {2\pi {f_{d,u}}T} \right) \approx 1$ is obtained. Besides, we can derive ${{{\tilde N}_e}}<1$ from $N_e={\lfloor {\frac{{{\tau _u} - {T_{cp}}}}{{{T_s}}}} \rceil }$. Therefore, $f( {{{\tilde N}_a}} ) > 0$ holds, and the SINR of OFDM block increases monotonically when $0<N_a<{N}_e$.

When the length of compensation samples is greater than $N_e$, i.e., $N_e \le N_a \le N_s$ (or ${\tilde N}_e \le {\tilde N}_a \le {\tilde N}_s$), the SINR of OFDM block can be expressed as 
\begin{align}
    {\gamma _2} = \frac{{{{(1 - {{\tilde N}_e})}^2} + \tilde N_a^2 + 2{{\tilde N}_a}(1 - {{\tilde N}_e})\cos (2\pi {f_{d,u}}T)}}{{{h_2}({{\tilde N}_a})}},
    \label{SINR after coherent compensation length 2}
\end{align}
where ${h_2}({{\tilde N}_a})$ is the power of interference plus noise after coherent compensation, which is given by
\begin{align}
    {h_2}({{\tilde N}_a}) &= (1 - {{\tilde N}_a})\left[ {{{\tilde N}_a} - 2{{\tilde N}_e}\cos \left( {2\pi {f_{d,u}}T} \right)} \right] \\
    &+ {{\tilde N}_e}(2 - {{\tilde N}_e}) + (1 + {{\tilde N}_a})\frac{1}{{{\gamma _0}}}.\notag
\end{align}
The first-order derivative of \eqref{SINR after coherent compensation length 2} with respect to ${{{\tilde N}_a}}$ is further obtained as
\begin{equation}
    \frac{\partial }{{\partial {{\tilde N}_a}}}{\gamma _2} = \frac{{g({{\tilde N}_a})}}{{{{\left[ {{h_2}({{\tilde N}_a})} \right]}^2}}},
    \label{derivatinve of SINR length 2}
\end{equation}
where $g({{\tilde N}_a})$ is given by
\begin{align}
&g({{\tilde N}_a}) = (\frac{1}{{{\gamma _0}}} + 1 + 2B)\tilde N_a^2 + (\frac{2}{{{\gamma _0}}} + 2 - 4{{\tilde N}_e}B){{\tilde N}_a}\\
& + (1 - {{\tilde N}_e})\left[ {\frac{1}{{{\gamma _0}}}(2B - 1) + ( {\frac{1}{{{\gamma _0}}} + 2B - 4{B^2} + 1} ){{\tilde N}_e} - 1} \right],
\notag
\end{align}
where $B=\cos\left( {2\pi {f_{d,u}}T} \right)$. Since ${g( {{{\tilde N}_a}} )}$ is a quadratic function of ${{\tilde N}_a}$, the  axis of symmetry can be expressed as
\begin{equation}
	{\tilde N_a} =  - \frac{{\frac{1}{{{\gamma _0}}} + 1 - 2{{\tilde N}_e}B}}{{\frac{1}{{{\gamma _0}}} + 1 + 2B}}<{\tilde N_e}.
\end{equation}
According to the properties of quadratic functions, the optimal length of compensation samples to maximize the SINR of OFDM block is $N_a=N_e$ or $N_a=N_s$. 

{{For the case of $N_s< N_a \le {N}_c$ (or ${\tilde N}_s < {\tilde N}_a \le 1$), the SINR of the OFDM block decreases with the increase of $N_a$, as shown in \eqref{SINR after coherent compensation for case of Na > Ns} at the top of next page.  In summary, we have the following conclusions.}}
\begin{figure*}
	{\begin{equation}
			{\tilde \Upsilon _{c,block}^u = \frac{{\left[ {{{(1 - {{\tilde N}_e})}^2} + {{\tilde N}_s}^2 + 2{{\tilde N}_s}(1 - {{\tilde N}_e})\cos \left( {2\pi {f_{d,u}}T} \right)} \right]}}{{\left[ {{{\tilde N}_s}(1 - {{\tilde N}_s}) + {{\tilde N}_e}(2 - {{\tilde N}_e}) + ( {{{\tilde N}_a} - {{\tilde N}_s}} ) + 2({{\tilde N}_s}{{\tilde N}_e} - {{\tilde N}_e})\cos \left( {2\pi {f_{d,u}}T} \right)} \right] + (1 + {{\tilde N}_a})\frac{1}{{{\gamma _0}}}}}.}
			\label{SINR after coherent compensation for case of Na > Ns}
	\end{equation}}
\end{figure*}
\begin{remark}
When the length of compensation samples is less than $N_e$, i.e., $0<N_a<{N}_e$, the SINR of the OFDM block is monotonically increasing with  $N_a$ increasing. 
\label{remark 1}
\end{remark}
\begin{remark}
The optimal length of compensation samples is $N_a=N_e$ or $N_a=N_s$. When $N_a=N_e$, a complete OFDM symbol is formed in the receiving time interval, ICI is minimized. Furthermore, the power of useful signal in the frequency domain is maximized when $N_a=N_s$.
\label{remark 2}
\end{remark}

Due to its low computational complexity, the 2D-FFT method is commonly utilized to estimate the range and velocity of targets \cite{Sturm 1}. The 2D-FFT method performs IFFT and FFT along the subcarrier and symbol dimensions, respectively, to obtain RDM for target detection and parameter estimation. Next, the SINR of RDM obtained by the traditional sensing method and the proposed sensing method is analyzed.
\begin{figure*}
	\begin{align}
			RD{M^u}\left( {k,l} \right) &= \frac{1}{{M{N_c}}}\sum\limits_{n = 0}^{M - 1} {\sum\limits_{p = 0}^{{N_c} - 1} {\frac{1}{{S_{n,p}^{}}}\hat Y_n^u\left( p \right){e^{j\frac{{2\pi pk}}{{{N_c}}}}}{e^{-j\frac{{2\pi nl}}{M}}}} }\notag \\
			&= \frac{1}{{M{N_c}}}\sum\limits_{n = 0}^{M - 1} {\sum\limits_{p = 0}^{{N_c} - 1} {\left[ \begin{array}{l}
						({1+{{\tilde N}_a}{e^{j2\pi {f_{d,u}}T}}  - {{\tilde N}_e}}){{\tilde \alpha }_u}{{\bf a}^T}\left( {{\theta _u}} \right){{\bf w}_{tx}}{e^{ - j2\pi p\Delta f{\tau _u}}}{e^{j2\pi {f_{d,u}}nT}} \\
						+ \frac{1}{{S_{n,p}^{}}} I_{c,n}^u\left( p \right) + \frac{1}{{S_{n,p}^{}}}I_{s,n}^u\left( p \right) + \frac{1}{{S_{n,p}^{}}}{{\tilde N}_u}\left( p \right)
					\end{array} \right]{e^{j\frac{{2\pi pk}}{{{N_c}}}}}{e^{-j\frac{{2\pi nl}}{M}}}} } .
				\label{RDM coherent compensation}\tag{51}
	\end{align}
\end{figure*}
\begin{figure*}
	\begin{align}
		\mathbb{V}&\left[RD{M^u}\left( {k,l} \right)\right]  = \frac{1}{{M{N_c}}}\mathbb{E}\left[ {{{\left| {\frac{1}{{S_{n,k}^{}}}} \right|}^2}} \right]\left[ {( {A + {{\tilde N}_e}} ){{\left| {{{\tilde \alpha }_u}{{\bf a}^T}\left( {{\theta _u}} \right){{\bf w}_{tx}}} \right|}^2}\mathbb{E}\left[ {{{\left| {S_{n,k}^{}} \right|}^2}} \right] + ( {1 + {{\tilde N}_a}} ){\lambda _u}{\sigma ^2}} \right]\notag \\
		& \mathop  = \limits^{(a)} \frac{1}{{M{N_c}}}\mathbb{E}\left[ {{{\left| {\frac{1}{{S_{n,k}^{}}}} \right|}^2}} \right]\left\{ {\left[ {{{\tilde N}_e} + | {{{\tilde N}_a} - {{\tilde N}_e}} |( {1 - | {{{\tilde N}_a} - {{\tilde N}_e}} |} )} \right]{{| {{{\tilde \alpha }_u}{{\bf a}^T}\left( {{\theta _u}} \right){{\bf w}_{tx}}} |}^2}\mathbb{E}\left[ {{{\left| {S_{n,k}^{}} \right|}^2}} \right] + ( {1 + {{\tilde N}_a}} ){\lambda _u}{\sigma ^2}} \right\}.
		\label{power of IN}\tag{52}
	\end{align}
\end{figure*}
\begin{figure*}
	{\begin{equation}
		{\tilde \Upsilon} _{c,RDM}^u = 1 + \frac{{{{\left| {(1 + {{\tilde N}_s} - {{\tilde N}_e}){{\tilde \alpha }_u}{{\bf{a}}^T}\left( {{\theta _u}} \right){{\bf{w}}_{tx}}} \right|}^2}}}{{\frac{1}{{M{N_c}}}\mathbb{E}\left[ {{{\left| {\frac{1}{{S_{n,k}^{}}}} \right|}^2}} \right]\left\{ {\left[ {{{\tilde N}_a} - {{({{\tilde N}_s} - {{\tilde N}_e})}^2}} \right]|{{\tilde \alpha }_u}{{\bf{a}}^T}\left( {{\theta _u}} \right){{\bf{w}}_{tx}}{|^2}\mathbb{E}\left[ {{{\left| {S_{n,k}^{}} \right|}^2}} \right] + (1 + {{\tilde N}_a}){\lambda _u}{\sigma ^2}} \right\}}}.
		\label{SINR of RDM for case of Na > Ns}\tag{55}
	\end{equation}}
{\noindent} \rule[-10pt]{18cm}{0.05em}
\end{figure*}
\begin{figure*}
	\begin{equation}
		\Upsilon_{t,RDM}^u \approx 1 + \frac{{M{N_c}{{\left( {1 - {\tilde N}_e^u} \right)}^2}{{\left| {{{\tilde \alpha }_p}{\bf w}_{rx}^T{\bf b}\left( {{\theta _u}} \right){{\bf a}^T}\left( {{\theta _u}} \right){{\bf w}_{tx}}} \right|}^2}}}{{\sum\limits_{p = 0}^{U - 1} {{\tilde N}_e^p\left( {2 - {\tilde N}_e^p} \right){{\left| {{{\tilde \alpha }_p}{\bf w}_{rx}^T{\bf b}\left( {{\theta _p}} \right){{\bf a}^T}\left( {{\theta _p}} \right){{\bf w}_{tx}}} \right|}^2}\mathbb{E}\left[ {{{\left| {\frac{1}{{{S_{n - 1,k}}}}} \right|}^2}} \right]}  + {\bf w}_{rx}^T{\bf w}_{rx}^*{\sigma ^2}\mathbb{E}\left[ {{{\left| {\frac{1}{{{S_{n - 1,k}}}}} \right|}^2}} \right]}}.
		\label{SINR of RDM for traditional sensing algorithm}\tag{57}
	\end{equation}
	{\noindent} \rule[-10pt]{18cm}{0.05em}
\end{figure*}
\subsection{SINR of RDM}
In this section, the SINRs of RDM obtained by the proposed coherent compensation-based sensing method, signal separation method, and traditional 2D-FFT method \cite{Sturm 1} are derived. Since the ISI and ICI are no longer i.i.d, and coherent compensation introduces correlation between adjacent OFDM blocks, it is challenging to obtain a closed-form solution for SINR of RDM. To tackle this problem, an upper bound for the SINR of RDM is derived.

 For the case of $N_a \le N_s$, the frequency symbol after coherent compensation in \eqref{frequency symbol after coherent compensation} can be rewritten as
\begin{align}
	&\hat Y_n^u\left( p \right)\notag \\
	&{\rm{ = }}({1+{{\tilde N}_a}{e^{j2\pi {f_{d,u}}T}}  - {{\tilde N}_e}}){\tilde \alpha _u}{{\bf a}^T}({\theta _u}){{\bf w}_{tx}}S_{n,p}{e^{ - j2\pi p\Delta f{\tau _u}}}\notag \\
	&\qquad \qquad \qquad \qquad \qquad \qquad \qquad \qquad \qquad \times{e^{j2\pi {f_{d,u}}nT}} \notag \\
	&+ I_{c,n}^u\left( p \right) + I_{s,n}^u\left( p \right) + \tilde N_n^u\left( p \right). 
\end{align}
According to the conclusions in \cite{wu VCP}\cite{Convergence of Complex Envelope}, $I_{s,n}^u(p)$ and $I_{c,n}^u(p)$  follow the CSCG distributions. Thus, we obtain
\begin{equation}
	I_{s,n}^u\left( p \right)\sim {\cal C}{\cal N}\left( {0,{{\tilde N}_e}{{\left| {{{\tilde \alpha }_u}{{\bf a}^T}\left( {{\theta _u}} \right){{\bf w}_{tx}}} \right|}^2}E\left[ {{{\left| {{S_{n - 1,k}}} \right|}^2}} \right]} \right),
\end{equation}
\begin{equation}
	I_{c,n}^u\left( p \right)\sim {\cal C}{\cal N}\left( {0,A{{\left| {{{\tilde \alpha }_u}{{\bf a}^T}\left( {{\theta _u}} \right){{\bf w}_{tx}}} \right|}^2}E\left[ {{{\left| {S_{n,k}^{}} \right|}^2}} \right]} \right),
\end{equation}
where 
\begin{align}
	A&={{\tilde N}_a}( {1 - {{\tilde N}_a}} ) + [ {2{{\tilde N}_a}{{\tilde N}_e} - 2\min ( {{{\tilde N}_e},{{\tilde N}_a}} )} ]\cos \left( {2\pi {f_{d,u}}T} \right) \notag \\
	&+ {{\tilde N}_e}( {1 - {{\tilde N}_e}} ).
	\label{A}
\end{align}
Then, the value of the $(k,l)$th bin in RDM obtained by the 2D-FFT method can be expressed as \eqref{RDM coherent compensation} at the top of this page.

The SINR of RDM is defined as the power ratio of the bin corresponding to the target to the interference plus noise \cite{yuan multiple CP}. We ignore the correlation of $I_{c,n}^u(p)$ and $I_{s,n}^u(p)$ along $n$ and $p$ and derive an approximate theoretical solution for the SINR of RDM, which is an upper bound. The variance of $RD{M^u}\left( {k,l} \right)$ is the power of interference plus noise in RDM, as shown in \eqref{power of IN} at the top of this page, where $(a)$ is obtained since the Doppler frequency shift is much smaller than the subcarrier spacing, i.e., $\cos \left( {2\pi {f_{d,u}}T} \right) \approx 1$.

For the $u$th target, the power of the useful signal in $RDM^u(k,l)$ is the power of $(k_u,l_u)$th bin,  which is
\begin{align}
	&\mathbb{E}\left[\left|RD{M^u}\left( {k_u,l_u} \right)\right|^2\right] \tag{53} \\
	&\approx {\left| {( {1 + {{\tilde N}_a} - {{\tilde N}_e}} ){{\tilde \alpha }_u}{{\bf a}^T}\left( {{\theta _u}} \right){{\bf w}_{tx}}} \right|^2}+\mathbb{V}\left[RD{M^u}\left( {k_u,l_u} \right)\right], \notag
\end{align}
where $k_u=\lfloor B\tau_u \rceil$ and $l_u=\lfloor Mf_{d,u}T\rceil$. Therefore, the SINR of RDM after coherent compensation for the case of $N_a \le N_s$ is
\begin{align}
		{\Upsilon_{c,RDM}^u} &= \frac{{\mathbb{E}\left[ {{{\left| {RD{M^u}\left( {{k_u},{l_u}} \right)} \right|}^2}} \right]}}{{\mathbb{V}\left[ {RD{M^u}\left( {{k_u},{l_u}} \right)} \right]}} \label{SINR of RDM for coherent compensation algorithm}\tag{54}\\
		&= 1 + \frac{{{{\left| {(1 + {{\tilde N}_a} - {{\tilde N}_e}){{\tilde \alpha }_u}{{\bf{a}}^T}\left( {{\theta _u}} \right){{\bf{w}}_{tx}}} \right|}^2}}}{{\mathbb{V} \left[ {RD{M^u}\left( {k_u,l_u} \right)} \right]}}. \notag
\end{align}
{Similarly, the SINR of RDM after coherent compensation for the case of $N_s < N_a \le N_c$ can be expressed as \eqref{SINR of RDM for case of Na > Ns} at the top of this page.}

For the signal separation method, the SINR of RDM can be obtained by substituting ${\tilde N}_a=0$ into \eqref{SINR of RDM for coherent compensation algorithm}, which can be expressed as
\begin{align}
		&\Upsilon_{s,RDM}^u  \label{SINR of RDM for signal separation algorithm}\tag{56}\\
		&\approx 1 + \frac{{M{N_c}{{( {1 - {{\tilde N}_e}} )}^2}}}{{{{\tilde N}_e}( {2 - {{\tilde N}_e}} )\mathbb{E}\left[ {{{\left| {S_{n,k}^{}} \right|}^2}} \right]\mathbb{E}\left[ {{{\left| {\frac{1}{{S_{n,k}^{}}}} \right|}^2}} \right] + \frac{{{\lambda _u}{\sigma ^2}\mathbb{E}\left[ {{{\left| {\frac{1}{{S_{n,k}^{}}}} \right|}^2}} \right]}}{{{{\left| {{{\tilde \alpha }_u}{{\bf a}^T}\left( {{\theta _u}} \right){{\bf w}_{tx}}} \right|}^2}}}}}.\notag
\end{align}

If the traditional received signal expressed in \eqref{traditional received signal} is used for sensing, the interference from multiple targets are superimposed in the RDM. Thus, the SINR of RDM for the traditional 2D-FFT method is given by \eqref{SINR of RDM for traditional sensing algorithm} at the top of next page.

\subsection{The maximum sensing range}
The 2D-FFT method is finally carried out for range and velocity estimation, whose maximum unambiguous range is given by \cite{Sturm 1}
\begin{equation}
	{R_{unambi}} = \frac{{{c_0}}}{{2\Delta f}} = \frac{{{c_0}{T_d}}}{2}.\tag{58} 
\end{equation}
However, the 2D-FFT method assumes that the round-trip delay does not exceed the CP duration \cite{Sturm 1}, so the actual maximum sensing range limited by CP length is much smaller than the maximum unambiguous range. This paper proposes a coherent compensation method to improve the SINR of the OFDM block, extending the maximum sensing range to approximate the maximum unambiguous range of the 2D-FFT method.

\section{Simulation Results and Analysis}
\label{sec: Simulation Results and Analysis}
This section conducts extensive simulations to verify the aforementioned analysis on SINR of RDM and the detection probability. The parameter setup is shown in TABLE \ref{sys_para} unless otherwise emphasized. For convenience, the abbreviations of the proposed sensing method and other benchmarking methods are provided as follows.
\begin{enumerate}
        \item[$\bullet$] The \emph{2D-FFT} method first utilizes LS beamforming to combine received signals from different antennas, as shown in \eqref{traditional received signal}. Then, the point-division-based 2D-FFT method  \cite{Sturm 1} and constant false alarm rate (CFAR) detector \cite{CFAR} are applied for target detection and parameter estimation.
	\item[$\bullet$] The \emph{signal separation and coherent compensation  (S \& C)} method first utilizes the MUSIC method to estimate the angles of targets within one beam, followed by LS estimation for signal separation. The coherent compensation is then performed to enhance the SINR of the OFDM block. Finally, the point-division-based 2D-FFT method and CFAR detector are applied for target detection and parameter estimation.
	\item[$\bullet$] The \emph{benchmark case} solely achieves the LS estimation-based signal separation without coherent compensation. The other steps remain the same as the S \& C method.
\end{enumerate}

\begin{table}[h]
	\caption{Simulation parameters}
        \label{sys_para}
	\begin{center}
		\begin{tabular}{l l l}
			\hline
			\hline
			
			{Parameter} & {Symbol} & {Value} \\
			
			\hline
			
			Carrier frequency & ${f_c}$ & $28$ GHz\\
			
			Subcarrier spacing & $\Delta f$ & $120$ kHz\\
			
			Number of subcarriers & $N_c$ & $4096$\\
			
			Number of OFDM symbols & $M$ & $256$\\
			
			Elementary OFDM symbol duration & $T_{d}$ & $8.33$ $\mathrm{\mu s}$\\
			
			CP duration & $T_{cp}$ & $0.59$ $\mathrm{\mu s}$\\
			
			Total OFDM symbol duration & $T$ & $8.92$ $\mathrm{\mu s}$\\
			
			Number of transmit antennas &
			$N_t$ & $16$\\
			
			Number of receive antennas &
			$N_r$ & $16$\\
			
			Gain of transmit antennas &
			$G_t$ & $32$ dB\\
			
			Gain of receive antennas &
			$G_r$ & $32$ dB\\
			
			RCS & $\alpha$ & $10$ ${\text m}^2$\\
			
			Height of BS & $h$ & $30$ m\\
			
			Light speed & $c_0$ & $3 \times 10^8$ m/s\\
			\hline
			\hline
		\end{tabular}
	\end{center}
	\label{table_1}
\end{table}

In the following simulations, the transmit power of BS is set as $46$ dBm, and the $16$ QAM symbols are transmitted. The transmit beam is aligned with the UE at $500$ m, and the round-trip delay is greater than the CP duration. Due to the wide beam, BS will also cover the close-range target, which may drown the long-range target, especially when the ISI and ICI also exist in the echo signals of the close-range target. Therefore, we simultaneously consider the close-range and long-range targets in the simulations.  The beamwidth is about $6.36^\circ$ shown in Fig. \ref{figure_2_c}, which results in a transmit beam that can cover the target at $260$ m. The velocities of the targets are set as $40$ m/s and $60$ m/s, respectively.

Fig. \ref{DoA estimation music} shows the DoA estimation results using the MUSIC method. The MUSIC method demonstrates the capability to distinguish targets within one beam with considerable precision. The reduction in the search range of angle is due to the known prior information about the transmit beam, leading to a substantial reduction in computational complexity.
\begin{figure}[ht]
	\centering
	\includegraphics[scale=0.64]{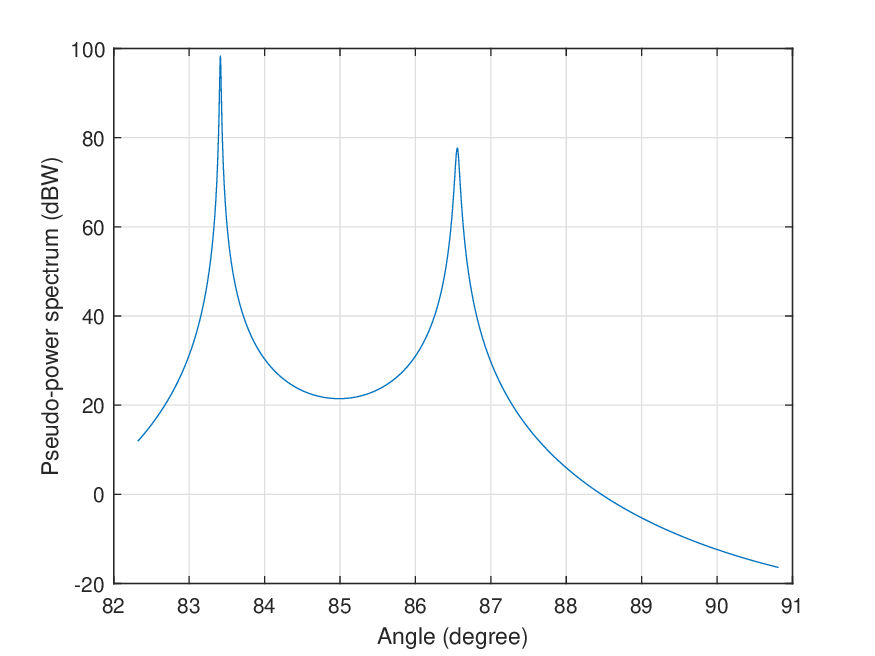}
	\caption{{The DoA estimation using the MUSIC method.}}
	\label{DoA estimation music}
\end{figure}

\begin{figure}[ht]
	\centering
	\includegraphics[scale=0.64]{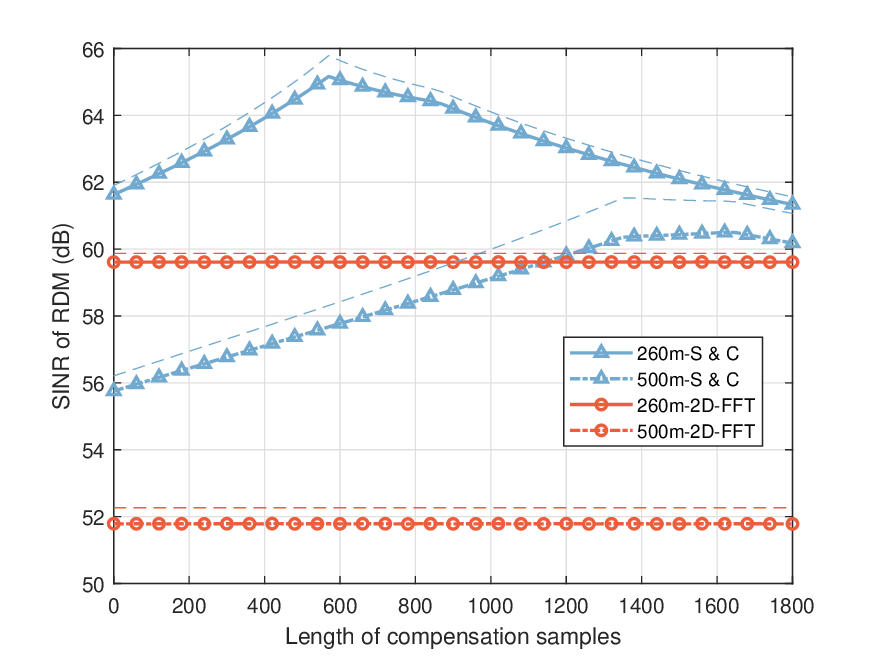}
	\caption{{The SINR of RDM versus the length of compensation samples.}}
	\label{The SINR of RDM versus the length of compensation samples}
\end{figure}

Fig. \ref{The SINR of RDM versus the length of compensation samples} depicts the SINR of RDM with different lengths of compensation samples. {The dash curves are the theoretical results, which fit the simulation results well, verifying the validity of the theoretical derivation.} When the length of compensation samples is less than the number of offset samples corresponding to the round-trip delay, the SINR of RDM monotonically increases with the increasing of the length of compensation samples, as anticipated in {\bf Remark \ref{remark 1}}. Moreover, there exists an optimal compensation length maximizing the SINR of RDM, as analyzed in {\bf Remark \ref{remark 2}}. For the target at $260$ m, the optimal compensation length is $N_e$, while it is $N_s$ for the target at $500$ m. This optimality criterion is confirmed in Fig. \ref{The SINR of RDM versus the length of compensation samples}.

In  Fig. \ref{The SINR of RDM versus the transmit power}, the SINR of RDM is presented for different transmit power. 
The simulated results of the S \& C, benchmark case, and 2D-FFT methods are obtained via $1000$ Monte Carlo simulations. The theoretical results obtained in Section IV-B are plotted as dash curves, which fit the simulation results well. As the transmit power increases, the SINR of RDM converges at high transmit power. Since the power of the useful signal, ISI, and ICI all increase with the increasing of transmit power, which ultimately limits the SINR of RDM from increasing infinitely. This phenomenon is consistent with the theoretical results in \eqref{SINR of RDM for coherent compensation algorithm}, \eqref{SINR of RDM for signal separation algorithm}, and \eqref{SINR of RDM for traditional sensing algorithm}.

In Fig. \ref{The SINR of RDM versus the transmit power}, the SINR of RDM obtained by the traditional 2D-FFT method does not change obviously as the transmit power increases. This is because the power of ISI and ICI also amplifies with the increase of transmit power. 
Moreover, the ISI and ICI of targets at different ranges are superimposed under the traditional 2D-FFT method, as shown in \eqref{SINR of RDM for traditional sensing algorithm}. The benchmark case effectively separates the echo signals reflected from different targets, as well as the ISI and ICI. Consequently, the SINR of the RDM achieved by the benchmark case shows a notable enhancement compared with the traditional 2D-FFT method. However, the signal separation method based on LS estimation can also amplify the noise. In the case where the echo signal is weak, the reduction of the interference signal from other targets may not compensate for the noise amplification, thus deteriorating the SINR of RDM, as shown in Fig. \ref{The SINR of RDM versus the transmit power}. Furthermore, the SINR improvement achieved by the benchmark case is more prominent for long-range targets. This is because the power of ISI and ICI for close-range targets is larger, which has a severe impact on long-range target. Based on the spatial signal separation method, the power of the useful signal (i.e., the peak power in RDM) is further enhanced through the coherent compensation, thus increasing the SINR of RDM. As shown in Fig. \ref{The SINR of RDM versus the transmit power}, the SINR of RDM  obtained by the S \& C method increases approximately $10$ dB for the target at $500$ m and about $6$ dB for the target at $260$ m compared with the 2D-FFT method. This confirms the effectiveness of the proposed long-range sensing method. 
\begin{figure}[t]
	\centering
	\includegraphics[scale=0.64]{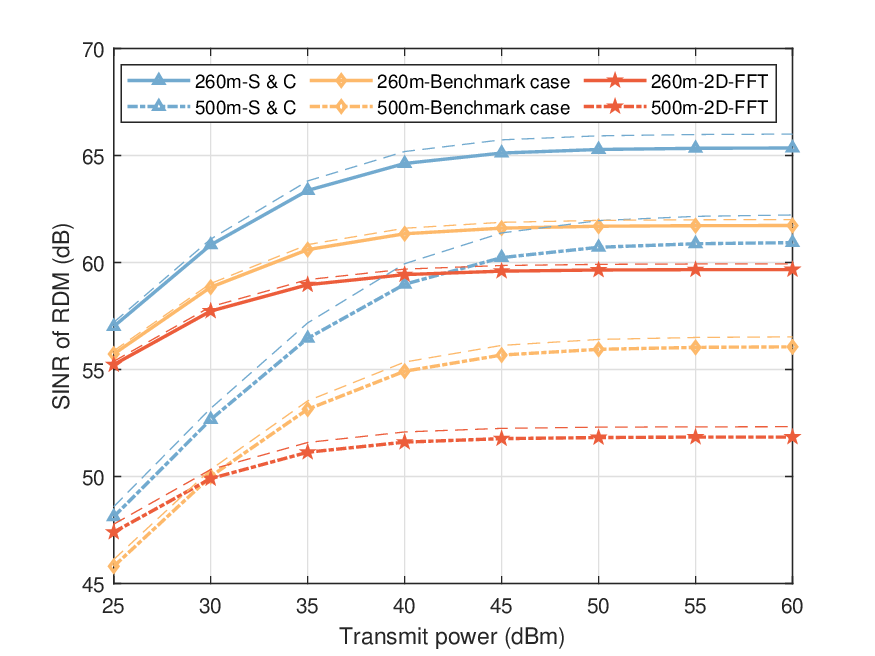}
	\caption{{The SINR of RDM versus the transmit power. The dash curves are theoretical results, while the other curves are simulation results.}}
	\label{The SINR of RDM versus the transmit power}
\end{figure}

Fig. \ref{The SINR of RDM versus the different range of target} demonstrates the SINR of RDM for S \& C, benchmark case, and traditional 2D-FFT methods under various ranges. The target at each range suffers interference from the closest target within the same beam. For these three methods, the SINR of RDM decreases as the range of the target increases. Moreover, both the  S \& C and benchmark case methods notably enhance the SINR of RDM for the target at a larger range, which reveals the benefit of the proposed method to improve the sensing performance for long-range targets. For close-range targets, which are less susceptible to ISI and ICI, the proposed method yields a modest improvement in the SINR of RDM. Besides, the derived theoretical values for the SINR of RDM fit better with the simulation values particularly when the range is longer.

\begin{figure}[t]
	\centering
	\includegraphics[scale=0.64]{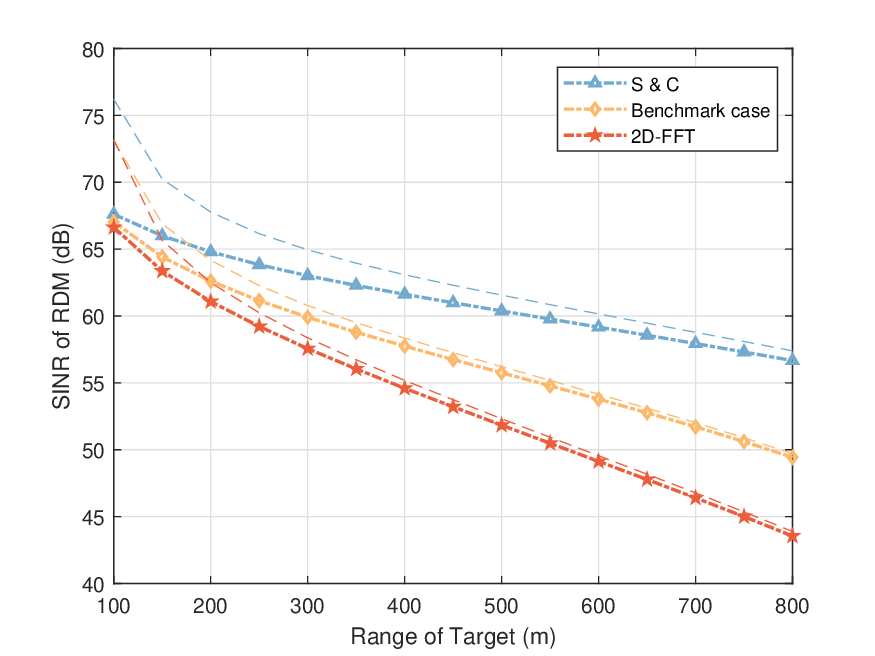}
	\caption{{The SINR of RDM versus the different range of the target.}}
	\label{The SINR of RDM versus the different range of target}
\end{figure}

\begin{figure}[t]
	\centering
	\includegraphics[scale=0.64]{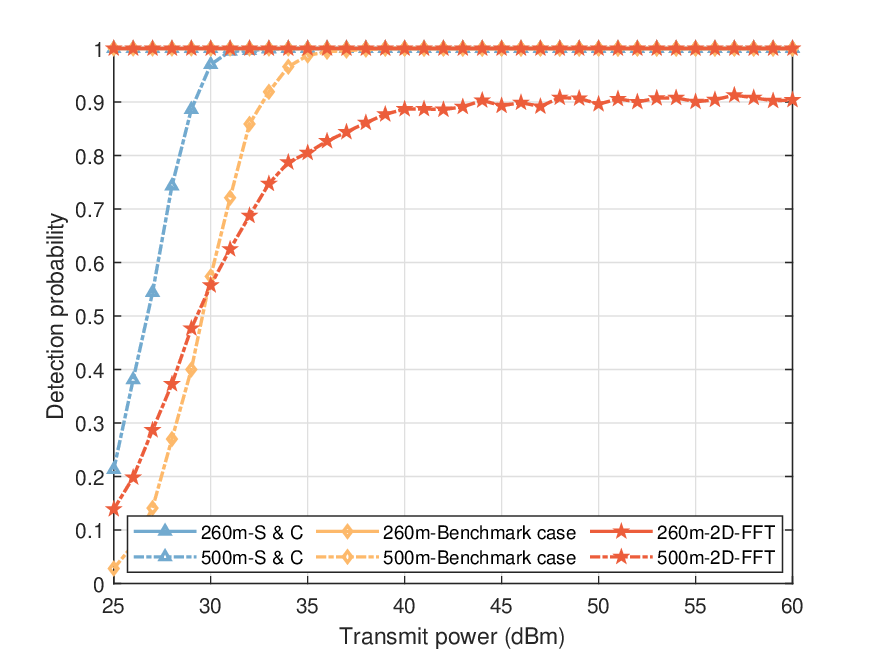}
	\caption{{The detection probability versus the transmit power under different methods. The curves corresponding to the $260$ m target overlap.}}
	\label{Pd under different power}
\end{figure}
Fig. \ref{Pd under different power} shows the detection probability versus the transmit power for the S \& C, benchmark case, and traditional 2D-FFT methods. The CFAR detector is applied for target detection, where the false alarm rate is set as $10^{-11}$. $1000$ Monte Carlo simulations are performed to calculate the detection probability. In the real physical world, the location of the target is unknown before sensing, thus the length of compensation samples can be chosen based on the prior information of the transmit beam. In this context, the length of compensation samples is determined based on the optimal length for the target at $500$ m. In Fig. \ref{Pd under different power}, the detection probability of the target at $260$ m is as high as $100\%$ and the simulated curves for the three methods overlap. This is because the power of the useful echo signal is large and the detection of target suffers less from ISI and ICI. Although the length of compensation samples is based on the target at $500$ m, it has little effect on the detection probability of the close-range target. For the target at $500$ m, the detection probability of the traditional 2D-FFT method is only $90\%$ because of the limited SINR. By contrast, the proposed S \& C and benchmark case methods greatly improve the detection probability. The detection probability of the benchmark case method is lower than the 2D-FFT method at low transmit power due to the amplified noise, as analyzed in Fig. \ref{The SINR of RDM versus the transmit power}.

\begin{figure}[t]
\centering
\includegraphics[scale=0.64]{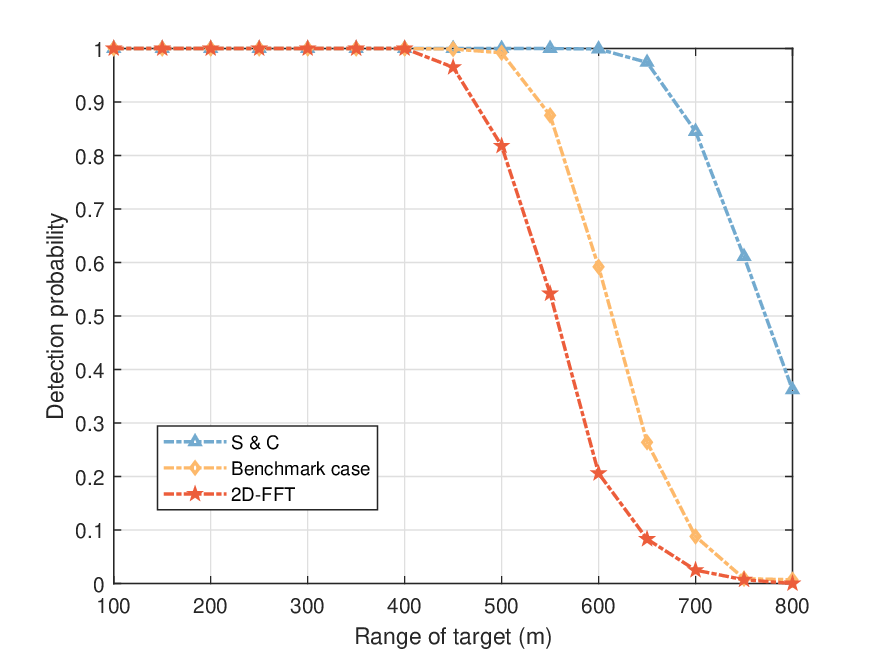}
\caption{{The detection probability versus the range of target, where the target at each range is interfered by the nearest target within the same beam. }}
\label{Pd under different range}
\end{figure}

\begin{figure}[t]
\centering
\includegraphics[scale=0.64]{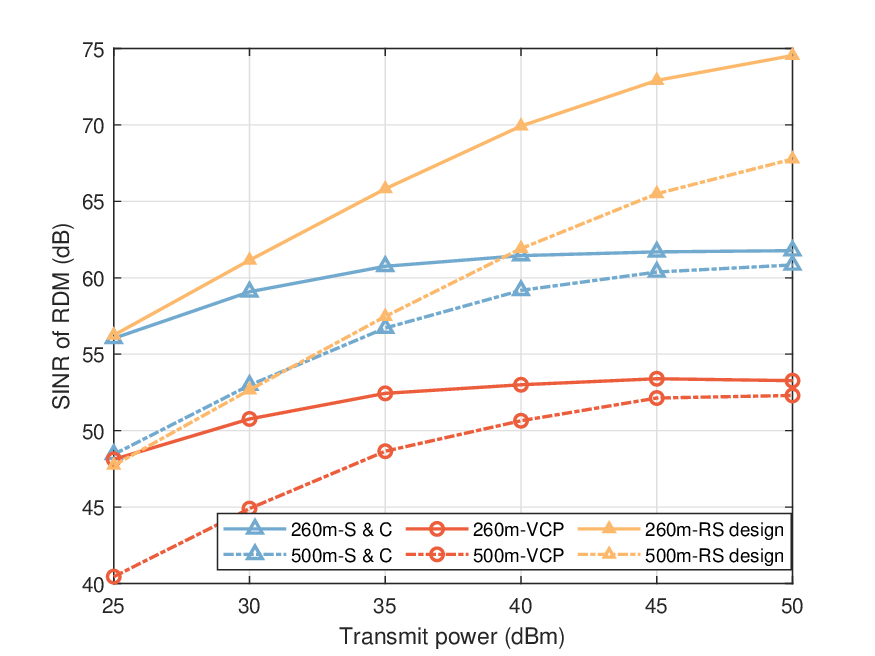}
\caption{The comparison of proposed S \& C method, VCP method \cite{wu VCP}, and RS design method \cite{Tang SCP}. }
\label{compared to VCP}
\end{figure}

Fig. \ref{Pd under different range} shows the detection probability versus the range of target, where the target at each range is interfered with the closest target within the same beam. The proposed benchmark case and S \& C methods improve the detection probability of the long-range target compared with the 2D-FFT method. For the S \& C method, the detection probability remains about $96\%$ for the target at $650$ m. In comparison, the detection probabilities of the benchmark case method and 2D-FFT method are only $25\%$ and $9\%$, respectively. When the range of the target exceeds $800$ m, both the benchmark case and 2D-FFT methods have poor performance in detection probability. By contrast, the S \& C method exhibits a significant advantage, with a detection probability of approximately $40\%$. Therefore, our proposed method significantly enhances the detection capability for long-range targets. 

Fig. \ref{compared to VCP} compares the SINR of RDM between the proposed S \& C method, the VCP method in \cite{wu VCP}, and the RS design method in \cite{Tang SCP}. The length of compensation samples for the S \& C method is selected based on the target at $500$ m. To maintain the same computational complexity, the length of the re-segmented sub-block for the VCP method is $N_c$, with adjacent sub-blocks having no overlap. Moreover, the length of VCP is the number of offset samples related to the target at $500$ m. For RS design method, the number of RSs is equal to the number of OFDM symbols in S \& C and VCP methods, and the power of each subcarrier is doubled to maintain the same transmit power. As shown in Fig. \ref{compared to VCP}, the S \& C method outperforms the VCP method for BS monostatic sensing. 
This advantage stems from that the re-segmented sub-blocks in the VCP method disrupt the original structure of OFDM, resulting in a distribution approaching the Gaussian distribution for recovered frequency domain symbols. Consequently, noise is significantly amplified during the element-wise-division process in the 2D-FFT method, deteriorating the sensing performance. 
As the second half of each reflected OFDM symbol is ISI and ICI-free, the RS design method has the best performance. However, half of the subcarriers are required to be silent, sacrificing spectrum utilization. Although the S \& C method cannot completely eliminate the ISI and ICI, it is suitable for the existing communication signal format without changing the structure of the reference signal and achieves a compromise between spectrum utilization and sensing performance.

\section{Conclusion}
\label{sec: Conclusion}
This paper focuses on the monostatic sensing for MIMO-OFDM ISAC BS. Given the wide beam, the interference from the close-range targets may drown the long-range target. To address this problem, a MUSIC and LS estimation-based spatial signal separation method is proposed to separate the echo signals reflected from different targets. Moreover, a coherent compensation-based sensing signal processing method at the receiver is proposed to enhance the SINR of RDM. The theoretical derivations for SINR of RDM obtained by the proposed method are analyzed. Simulation results verify the performance improvement of the proposed method in the SINR of RDM and detection probability for long-range targets compared to the traditional 2D-FFT method. 
\section*{Appendix}
\begin{figure*}
	\begin{align}
		{P_s} = \mathbb{E}\left[ {I_s\left( p \right){I_s^*}\left( p \right)} \right] &= {\left| {\frac{1}{{{N_c}}}{{\tilde \alpha }_u}{{\bf a}^T}\left( {{\theta _u}} \right){{\bf w}_{tx}}} \right|^2}\mathbb{E}\left[ {\sum\limits_{k = 0}^{{N_c} - 1} {\sum\limits_{i = 0}^{{N_e} - 1} {\sum\limits_{l = 0}^{{N_c} - 1} {\sum\limits_{r = 0}^{{N_e} - 1} {{S_{n - 1,k}}S_{n - 1,l}^*{e^{\frac{{j2\pi \left( {k - l} \right)\left( {{T_{cp}} - {\tau _u}} \right)}}{{{N_c}{T_s}}}}}{e^{\frac{{j2\pi \left( {k - p} \right)i}}{{{N_c}}}}}{e^{\frac{{ - j2\pi \left( {l - p} \right)r}}{{{N_c}}}}}} } } } } \right] \notag \\
		&= {\left| {\frac{1}{{{N_c}}}{{\tilde \alpha }_u}{{\bf a}^T}\left( {{\theta _u}} \right){{\bf w}_{tx}}} \right|^2}\sum\limits_{k = 0}^{{N_c} - 1} {\sum\limits_{i = 0}^{{N_e} - 1} {\sum\limits_{l = 0}^{{N_c} - 1} {\sum\limits_{r = 0}^{{N_e} - 1} {\mathbb{E}\left[ {{S_{n - 1,k}}S_{n - 1,l}^*} \right]{e^{\frac{{j2\pi \left( {k - l} \right)\left( {{T_{cp}} - {\tau _u}} \right)}}{{{N_c}{T_s}}}}}{e^{\frac{{j2\pi \left( {k - p} \right)i}}{{{N_c}}}}}{e^{\frac{{ - j2\pi \left( {l - p} \right)r}}{{{N_c}}}}}} } } } \notag\\
		& \mathop  = \limits^{\left( { a} \right)} {\left| {\frac{1}{{{N_c}}}{{\tilde \alpha }_u}{{\bf a}^T}\left( {{\theta _u}} \right){{\bf w}_{tx}}} \right|^2}\mathbb{E}\left[ {{{\left| {{S_{n - 1,k}}} \right|}^2}} \right]\sum\limits_{k = 0}^{{N_c} - 1} {\sum\limits_{i = 0}^{{N_e} - 1} {\sum\limits_{r = 0}^{{N_e} - 1} {{e^{\frac{{j2\pi \left( {k - p} \right)\left( {i - r} \right)}}{{{N_c}}}}}} } } 
		\label{derive the power of ISI}.\tag{68}
	\end{align}
\end{figure*}
\begin{figure*}
	\begin{align}
		{P_c} = \mathbb{E}\left[ {I_c\left( p \right){I_c^*}\left( p \right)} \right]
		&= {\left| {\frac{1}{{{N_c}}}{{\tilde \alpha }_u}{{\bf a}^T}\left( {{\theta _u}} \right){w_{tx}}} \right|^2}\mathbb{E}\left[ {\sum\limits_{k = 0,k \ne p}^{{N_c} - 1} {\sum\limits_{i = {N_e}}^{{N_c} - 1} {\sum\limits_{l = 0,l \ne p}^{{N_c} - 1} {\sum\limits_{r = {N_e}}^{{N_c} - 1} {{S_{n,k}}S_{n,l}^*{e^{\frac{{ - j2\pi k{\tau _u}}}{{{N_c}{T_s}}}}}{e^{\frac{{j2\pi \left( {k - p} \right)i}}{{{N_c}}}}}{e^{\frac{{ - j2\pi l{\tau _u}}}{{{N_c}{T_s}}}}}{e^{\frac{{ - j2\pi \left( {l - p} \right)r}}{{{N_c}}}}}} } } } } \right] \notag \\
		&= {\left| {\frac{1}{{{N_c}}}{{\tilde \alpha }_u}{{\bf a}^T}\left( {{\theta _u}} \right){w_{tx}}} \right|^2}\mathbb{E}\left[ {{{\left| {{S_{n,k}}} \right|}^2}} \right]\sum\limits_{k = 0,k \ne p}^{{N_c} - 1} {\sum\limits_{i = {N_e}}^{{N_c} - 1} {\sum\limits_{r = {N_e}}^{{N_c} - 1} {{e^{\frac{{j2\pi \left( {k - p} \right)\left( {i - r} \right)}}{{{N_c}}}}}} } }.
		\label{derive the power of ICI} \tag{71}
	\end{align}
{\noindent} \rule[-10pt]{18cm}{0.05em}
\end{figure*}
\subsection{Proof of Theorem \ref{theorem 1}}
The angles of targets estimated using the MUSIC method are $\boldsymbol{\theta} $. Thus, the value of ${{\bf B}\left( \boldsymbol{\theta}  \right)}$ can be obtained. The LS estimation of ${\bf x}\left( t \right)$ can be expressed as
\begin{equation}
	\mathop {\min }\limits_{{\bf \hat x}\left( t \right)} {\left\| {{\bf y}\left( t \right) - {\bf B}\left( \boldsymbol{\theta}  \right){\bf \hat x}\left( t \right)} \right\|_2^2}.
	\tag{59}\label{LS estimation 1}
\end{equation}
For ease of expression, \eqref{LS estimation 1} is written as 
\begin{equation}
	\mathop {\min }\limits_{\bf \hat x} {\left\| {{\bf y} - {\bf B\hat x}} \right\|_2^2}.\tag{60}
\end{equation}
Let $f\left( {\bf \hat x} \right) = \left\| {{\bf y} - {\bf B\hat x}} \right\|_2^2$, then we have
\begin{equation}
	f\left( {\bf \hat x} \right) = tr\left( {{\bf y}{{\bf y}^H} - {\bf y}{{\bf \hat x}^H}{{\bf B}^H} - {\bf B}{\bf \hat x}{{\bf y}^H} + {\bf B}{\bf \hat x}{{\bf \hat x}^H}{{\bf B}^H}} \right).\tag{61}
\end{equation}
Deriving $f\left( {\bf \hat x} \right)$ with respect to $ {\bf \hat x} $, we obtain
\begin{equation}
	\frac{\partial }{{\partial {\bf \hat x}}}f\left( {\bf \hat x} \right) =  - {\left( {{{\bf y}^H}{\bf B}} \right)^T} + {\left( {{{\bf B}^H}{\bf B}} \right)^T}\bf \hat x^*.\tag{62}
\end{equation}
By letting  $\frac{\partial }{{\partial {\bf \hat x}}}f\left( {\bf \hat x} \right) = {\bf 0}$, the LS estimation of ${\bf \hat x}$ is given by 
\begin{equation}
	{\bf \hat x} = {\left( {{{\bf B}^H}{\bf B}} \right)^{ - 1}}{{\bf B}^H}{\bf y}.\tag{63}
	\label{LS estimation}
\end{equation}
Thus, the separated signals can be expressed as \eqref{signal seperation}.

\subsection{Proof of Theorem \ref{theorem 2}}
Sampling ${\bf n}\left( t \right)$ at $t=t_i$, we obtain
\begin{equation}
	{\bf n}\left( {{t_i}} \right){\rm{ = }}{\left[ {{n_0}\left( {{t_i}} \right),{n_1}\left( {{t_i}} \right), \cdots ,{n_{{N_r} - 1}}\left( {{t_i}} \right)} \right]^T},\tag{64}
\end{equation}
where the noise ${n_r}\left( {{t_i}} \right)$, $r=0,1,\cdots,N_r-1$, on different antenna elements follows i.i.d CSCG distribution, i.e., ${n_r}\left( {{t_i}} \right)\sim {\cal C}{\cal N}\left( {0,{\sigma ^2}} \right)$. According to \eqref{signal seperation}, we gain
\begin{equation}
	{\bar n_u}\left( {{t_i}} \right) = {{\bf b}^T_u}{\bf n}\left( {{t_i}} \right) 
	 = \sum\limits_{r = 0}^{{N_r} - 1} {{b_{u,r}}{n_r}\left( {{t_i}} \right)},
	 \label{seperated noise}\tag{65}
\end{equation}
where ${{\bf b}^T_u} = \left[ {{b_{u,0}},{b_{u,1}}, \cdots ,{b_{u,{N_r} - 1}}} \right]$ is the $u$th row of Moore-Penrose inverse matrix ${\left[ {{\bf B}\left( {\boldsymbol{\theta}}  \right)} \right]^\dag }$. 
Then, the expectation and variance of ${\bar n_u}\left( {{t_i}} \right)$ can be, respectively, expressed as
\begin{equation}
	\mathbb{E} \{{\bar n_u}\left( {{t_i}}\right)\}=0,\tag{66}
\end{equation}
\begin{equation}
	\mathbb{V} \{{\bar n_u}\left( {{t_i}}\right)\}={\lambda _u}{\sigma ^2},\tag{67}
\end{equation}
where $\lambda_u=\sum\limits_{r = 0}^{{N_r} - 1} {{{\left| {{b_{u,r}}} \right|}^2}}$ is the $k$th element on the diagonal of ${\left[ {{\bf B}\left( \boldsymbol{\theta}  \right)} \right]^\dag }{[ {{{[ {{\bf B}\left(  \boldsymbol{\theta}  \right)} ]}^\dag }} ]^H}$. Thus, we have ${\bar n_u}\left( {{t_i}} \right) \sim {\cal C}{\cal N}\left( {0,{\lambda _u}{\sigma ^2}} \right)$. The proof is completed.

\subsection{Proof of \eqref{power of ISI signal before} and \eqref{power of ICI signal before}}

According to \eqref{frequency domain symbol}, the expectation of $I_s(p)$ is zero. Thus, the power of ISI is the variance of $I_s(p)$, as shown in \eqref{derive the power of ISI} at the top of next page, where $(a)$  applies the independence between transmitted symbols. Only when $k=l$ and $i \ne r$, ${\mathbb{E}[ {{S_{n - 1,k}}S_{n - 1,l}^*}]}$ is not zero and summing $\exp ( {\frac{{j2\pi \left( {k - p} \right)\left( {i - r} \right)}}{{{N_c}}}})$ from $k=0$ to $k=N_c-1$ equals zero. Moreover, we obtain
\begin{align}
		&\sum\limits_{k = 0}^{{N_c} - 1} {\sum\limits_{i = 0}^{{N_e} - 1} {\sum\limits_{r = 0}^{{N_e} - 1} {{e^{\frac{{j2\pi \left( {k - p} \right)\left( {i - r} \right)}}{{{N_c}}}}}} } }  \notag \\
		&= \sum\limits_{k = 0}^{{N_c} - 1} {\sum\limits_{i = 0,i \ne r}^{{N_e} - 1} {\sum\limits_{r = 0,r \ne i}^{{N_e} - 1} {{e^{\frac{{j2\pi \left( {k - p} \right)\left( {i - r} \right)}}{{{N_c}}}}}}  + \sum\limits_{k = 0}^{{N_c} - 1} {\sum\limits_{i = 0}^{{N_e} - 1} 1 } } }  \notag \\
		&= {N_c}{N_e}.\tag{69}
\end{align}
Therefore, the power of ISI can be expressed as
\begin{equation}
	P_s={{{\tilde N}_e}}{\left| {{{\tilde \alpha }_u}{{\bf a}^T}\left( {{\theta _u}} \right){{\bf w}_{tx}}} \right|^2}\mathbb{E}\left[ {{{\left| {{S_{n - 1,k}}} \right|}^2}} \right].\tag{70}
\end{equation}

Similarly, the power of ICI is the variance of $I_c(p)$, as shown in \eqref{derive the power of ICI} at the top of next page. Furthermore, we gain
\begin{align}
		\sum\limits_{k = 0,k \ne p}^{{N_c} - 1} &{\sum\limits_{i = {N_e}}^{{N_c} - 1} {\sum\limits_{r = {N_e}}^{{N_c} - 1} {{e^{\frac{{j2\pi \left( {k - p} \right)\left( {i - r} \right)}}{{{N_c}}}}}} } } \tag{72}\\
		&= \sum\limits_{k = 0}^{{N_c} - 1} {\sum\limits_{i = {N_e},i \ne r}^{{N_c} - 1} {\sum\limits_{r = {N_e},r \ne i}^{{N_c} - 1} {{e^{\frac{{j2\pi \left( {k - p} \right)\left( {i - r} \right)}}{{{N_c}}}}}} } } \notag\\
		&- \sum\limits_{i = {N_e},i \ne r}^{{N_c} - 1} {\sum\limits_{r = {N_e},r \ne i}^{{N_c} - 1} 1 }  + \sum\limits_{k = 0,k \ne p}^{{N_c} - 1} {\sum\limits_{i = {N_e}}^{{N_c} - 1} 1 } \notag\\
		&= {N_e}\left( {{N_c} - {N_e}} \right).\notag
\end{align}
Thus, the power of ICI is given by
\begin{equation}
	P_c={{{\tilde N}_e}}( {1 - {{{\tilde N}_e}}}){\left| {{{\tilde \alpha }_u}{{\bf a}^T}\left( {{\theta _u}} \right){{\bf w}_{tx}}} \right|^2}\mathbb{E}\left[ {{{\left| {{S_{n,k}}} \right|}^2}} \right].\tag{73}
\end{equation}

\subsection{Proof of \eqref{power of noise after coherent compensation}}
From \eqref{frequency symbol after coherent compensation} we can obtain
\begin{equation}
	{\tilde N_u}\left( p \right) = \frac{1}{{\sqrt {{N_c}} }}\sum\limits_{i = 0}^{{N_c} - 1} {{{\tilde n}_u}\left( i \right){e^{j2\pi \frac{{ip}}{{{N_c}}}}}}.\tag{74}
\end{equation}
Since the expectation of ${\tilde N_u}\left( p \right)$ is zero, the power of ${\tilde N_u}\left( p \right)$ is the variance of ${\tilde N_u}\left( p \right)$, which can be expressed as
\begin{align}
	{{\tilde P}_n} &= \mathbb{E}\left[ {{{\tilde N}_u}\left( p \right)\tilde N_u^*\left( p \right)} \right]\notag \\
	&= \mathbb{E}\left[ {\frac{1}{{{N_c}}}\sum\limits_{i = 0}^{{N_c} - 1} {\sum\limits_{k = 0}^{{N_c} - 1} {{{\tilde n}_u}\left( i \right)\tilde n_u^*\left( k \right)} {e^{j2\pi \frac{{\left( {i - k} \right)p}}{{{N_c}}}}}} } \right]\notag \\
	&\mathop  = \limits^{\left( a \right)} \frac{1}{{{N_c}}}\sum\limits_{i = 0}^{{N_c} - 1} {E\left[ {{{\left| {{{\tilde n}_u}\left( i \right)} \right|}^2}} \right]}\notag \\
	&\mathop  = \limits^{\left( b \right)} ( {1 + {{\tilde N}_a}} ){\lambda _u}{\sigma ^2}.\tag{75}
\end{align}
where $(a)$ implies the independence between ${{{\tilde n}_u}\left( i \right)}$ and $(b)$ applies \eqref{noise in time domain after coherent compensation}.

\small
\bibliographystyle{ieeetr}

\end{document}